\documentclass[conference]{IEEEtran}
\IEEEoverridecommandlockouts

\usepackage[T1]{fontenc}
\usepackage{aecompl}
\usepackage{graphics, graphicx}
\usepackage{cite, svg, epsfig, enumerate}
\usepackage{amssymb, amsmath,bm, multirow, amsmath}
\usepackage{array, nicematrix}

\newtheorem{theorem}{Theorem}

\newtheorem{lemma}{Lemma}
\newtheorem{proposition}{Proposition}
\newtheorem{definition}{Definition}

\newcommand{\overbar}[1]{\mkern 1.5mu\overline{\mkern-1.5mu#1\mkern-1.5mu}\mkern 1.5mu}

\DeclareMathOperator{\inte}{\textrm{Int}}

\begin{document}

\title{Optimal Safety Control using High-Order Control Barrier Functions\\
\thanks{This work was supported by the National Natural Science Foundation of China under Grants 62573085, the Fundamental Research Funds for the Central Universities under Grant DUT22RT(3)090, and the LiaoNing Revitalization Talents Program (XLYC2403048).}
}

\author{Neng Li, Zuodong Pan, Jiaxing Wang, Weiguo Xia, and Wei Ren
\thanks{N. Li, Z. Pan J. Wang, W. Xia and W. Ren are with the Key Laboratory of Intelligent Control and Optimization for Industrial Equipment of Ministry of Education, Dalian University of Technology, Dalian 116024, China. 
Email: \textrm{\small \{744845870, zuodongpan, wanna\}@mail.dlut.edu.cn, \{wgxiaseu, wei.ren\}@dlut.edu.cn)}.}
}

\maketitle

\begin{abstract}
This paper investigates the optimal safety control problem of nonlinear control systems by proposing novel high-order control barrier functions (HOCBFs). Different from zeroing HOCBFs, two novel HOCBFs are derived and the safety controllers are designed in an explicit way. Next, we implement vector Lyapunov function approach to propose a novel high-order control Lyapunov function (HOCLF) for the stabilization control problem. The relations between the proposed and existing HOCBFs are discussed. Afterwards, the compatibility of the proposed HOCLF and HOCBF is addressed to guarantee the stabilization and safety control objectives simultaneously, and thus the optimal controller is established. Finally, a numerical example from the navigation problem of quadrotors is presented to illustrate the efficacy of the derived results. 
\end{abstract}

\begin{IEEEkeywords}
High-order control barrier function, high-order control Lyapunov function, optimal safety control. 
\end{IEEEkeywords}

\section{Introduction}
\label{sec-introduction}

To describe and guarantee different system performances, e.g., stability, safety and robustness, numerous certificate functions have been proposed and extensively applied in the past few decades \cite{Dawson2023safe}. Among these certificate functions, the most well-known may be Lyapunov functions, whose strict decrease along the state trajectory is able to show the convergence of the system state to the equilibrium \cite{Khalil2002nonlinear}. Hence, Lyapunov functions have been applied extensively \cite{Dixon2003nonlinear} and further extended as control Lyapunov functions (CLFs) for the control design \cite{Zvi1983stabilization}. Lyapunov functions enable us to avoid to compute the exact solutions of dynamical systems \cite{Dixon2003nonlinear}, and thus facilitate both stability analysis and controller design. In a similar way, other certificate functions are constructed, e.g., control barrier functions (CBFs) for system safety \cite{Ames2016control, Ren2022razumikhin}, artificial potential functions for safe convergence \cite{Paternain2019stochastic}, navigation functions for motion coordination \cite{Tanner2005formation}, and contract metrics for trajectory tracking \cite{Dawson2023safe}. These certificate functions are useful both theoretically and practically such that nonlinear and complex systems can be addressed and the computational complexity can be reduced. For instance, with CBFs, the computation of the reachable sets of dynamical systems is avoided and safety controllers are derived explicitly \cite{Prajna2004safety, Wieland2007constructive}. 

With increasing integrated demands, different system performances are required to be satisfied simultaneously, and thus different certificate functions are combined either implicitly or explicitly. A fundamental integration is for the stabilization and safety, both of which are the basic performances of dynamical systems. Hence, both CLFs and CBFs have been combined effectively in numerous applications, including automotive control problems \cite{Panagou2015distributed, Ames2016control}. In particular, both zeroing and reciprocal CBFs are proposed \cite{Ames2016control} such that the combinations can be done from different perspectives. However, many existing results are devoted to the classic first-order cases, while are not available for the so-called high-order cases. 

In terms of practical applications, dynamical systems are usually governed by physical laws, including Newton's law, Theorem of linear and angular momentum, and Kirchhoff's laws of current and voltage \cite{Siciliano2008springer}. Hence, the real-world system models are of second and higher orders. For instance, the dynamics of quadrotors are based on the Euler-Lagrange equation \cite{lee2010geometric} and are of second order. For the high-order cases, the existing CLFs/CBFs are not available, since the Lie derivatives of CLFs/CBFs related to control inputs are required to be non-zero. These requirements are to guarantee the existence of the controllers and are violated in the high-order cases \cite{Jankovic2018robust}. For instance, the control force of quadrotors is imposed on the angular velocity, while the desired performances are on the position space. In this way, if the CLF is defined on the position space, then the Lie derivative is zero such that the controller cannot be designed. As a result, numerous attention has been paid to the high-order cases \cite{Tan2021high, Xiao2021high, Xu2018constrained, Ong2024rectified}, and aims to propose a systematic framework which is highly relevant for real-world applications. However, the existing results are based on zeroing CBFs, which are from the function describing the safe set. Note that not all the description functions can be taken directly as CBFs, and for the high-order cases the constructed safe set is much smaller than the original one, which inevitably results in the conservatism. 

Inspired by the above discussions, in this paper we propose novel HOCBFs to address the optimal safety control problem of nonlinear control systems. Different from zeroing HOCBFs \cite{Tan2021high, Xiao2021high}, we first propose a reciprocal HOCBFs, which offers more flexibility in terms of the construction and applications, and design an explicit safety controller. Next, based on vector CLFs \cite{Ren2023vector}, a novel high-order CLF (HOCLF) is developed to address the stabilization control problem. Following the similar mechanism, a novel HOCBF is proposed such that the safe set does not need to be reconstructed, thereby relaxing the conservatism from zeroing HOCBFs. Note that with the proposed HOCBF and HOCLF, both the stabilizing and safety controllers can be derived explicitly, and the relations between the proposed and existing functions are discussed. Finally, the compatibility of the proposed HOCBF and HOCLF is investigated to satisfy the stabilization and safety objectives simultaneously. In particular, an HOCBF-stabilizable set is proposed to mediate the conflict between the two objectives, which extends the result in \cite{Cortez2022compatibility} to the high-order cases, and further the optimal safety controller is established, which is illustrated via a numerical example. 

The remainder of this paper is organized as follows. Preliminaries are presented in Section \ref{sec-consys}. High-order control functions are proposed in Section \ref{sec-HOCF}. The optimal controller is designed in Section \ref{sec-optcon}. Numerical results are presented in Section \ref{sec-example}, followed by the conclusion and future works in Section \ref{sec-conclusion}.
 
\section{Problem Formulation}
\label{sec-consys}

Let $\mathbb{R}:=(-\infty, +\infty), \mathbb{R}_{+}:=[0, +\infty)$ and $\mathbb{N}:=\{1, 2, \ldots\}$. For $x\in\mathbb{R}^{n}$, $x_{i}$ is the $i$-th element of $x$; $|x|$ is the Euclidean norm of $x$. For $x, y\in\mathbb{R}^{n}$, $(x, y):=(x^{\top}, y^{\top})^{\top}$, and $x\succeq y$ (or $x\preceq y$) if $x_{i}\geq y_{i}$ (or $x_{i}\leq y_{i}$) for all $i=1, \ldots, n$. A matrix $A\in\mathbb{R}^{n\times n}$ is \emph{Hurwitz} if all its eigenvalues lie in the open left half of the complex plane. Given a set $\mathbb{S}\subset\mathbb{R}^{n}$, $\partial\mathbb{S}$ is the boundary of $\mathbb{S}$; $\inte(\mathbb{S})$ is the interior of $\mathbb{S}$; $\hat{\mathbb{S}}$ is the closure of $\mathbb{S}$. Given $\delta>0$ and $\mathbf{x}\in\mathbb{R}^{n}$, the open ball centered at $\mathbf{x}$ with radius $\delta$ is defined as $\mathbb{B}(\mathbf{x}, \delta):=\{x\in\mathbb{R}^{n}: |x-\mathbf{x}|<\delta\}$; $\mathbb{B}(\delta):=\mathbb{B}(0, \delta)$. A continuous function $\alpha: \mathbb{R}_{+}\rightarrow\mathbb{R}_{+}$ is of class $\mathcal{K}$, if it is strictly increasing and  $\alpha(0)=0$; it is of class $\mathcal{K}_{\infty}$, if it is of class $\mathcal{K}$ and unbounded. A continuous function $\alpha: \mathbb{R}\rightarrow\mathbb{R}$ is of extended class $\mathcal{K}_{e}$, if it is strictly increasing and $\alpha(0)=0$. A continuous function $\beta: \mathbb{R}_{+}\times\mathbb{R}_{+}\rightarrow\mathbb{R}_{+}$ is of class $\mathcal{KL}$, if $\beta(s, t)\in\mathcal{K}$ for each fixed $t\geq0$ and $\beta(s, t)$ decreases to zero as $t\rightarrow\infty$ for each fixed $s\geq0$. $\mathfrak{C}(\mathbb{R}^{n}, \mathbb{R}^{p})$ denotes the class of continuous functions mapping $\mathbb{R}^{n}$ to $\mathbb{R}^{p}$. For $r\in\mathbb{N}$, $\mathfrak{C}^{r}(\mathbb{R}^{n}, \mathbb{R}^{p})$ denotes the class of continuously $r$-th differentiable functions mapping $\mathbb{R}^{n}$ to $\mathbb{R}^{p}$.

Consider the following nonlinear control system
\begin{equation}
\label{eqn-1}
\dot{x}=f(x)+g(x)u, 
\end{equation}
where $x\in\mathbb{R}^{n}$ is the state initialized at $x_{0}\in\mathbb{X}_{0}\subset\mathbb{R}^{n}$, and $u\in\mathbb{U}\subset\mathbb{R}^{m}$ is the control input. Both $f: \mathbb{R}^{n}\rightarrow\mathbb{R}^{n}$ and $g: \mathbb{R}^{n}\rightarrow\mathbb{R}^{n\times m}$ are assumed to be locally Lipschitz  continuous, which ensures the existence of the unique solution to \eqref{eqn-1}; see \cite{Khalil2002nonlinear} for more details. Let $f(0)=0$ and $g(0)=0$. Hence, $x(t)\equiv0$ with $t\geq0$ is a trivial solution to \eqref{eqn-1}.

\begin{definition}[\cite{Khalil2002nonlinear}]
\label{def-1}
Given a control input $u\in\mathbb{R}^{m}$, the system \eqref{eqn-1} is \emph{globally asymptotically stable (GAS)}, if there exists $\beta\in\mathcal{KL}$ such that for all $x_{0}\in\mathbb{R}^{n}$,
\begin{align}
\label{eqn-2}
|x(t)|&\leq\beta(|x_{0}|, t), \quad \forall t\geq0.
\end{align}
In addition, if \eqref{eqn-2} holds for $x_{0}\in\mathbb{X}_{0}\subset\mathbb{R}^{n}$, then the system \eqref{eqn-1} is semiglobally asymptotically stable (SGAS). 
\end{definition}

\begin{definition}[\cite{Ames2016control}]
\label{def-2}
A set $\mathbb{S}\subset\mathbb{R}^{n}$ is \emph{forward invariant (FI)} for the system \eqref{eqn-1}, if $x(t)\in\mathbb{S}$ for any trajectory $x(t)$ starting from $x_{0}\in\mathbb{S}$. If $\mathbb{S}$ is FI, then the system \eqref{eqn-1} is \emph{safe} with respect to $\mathbb{S}$, and $\mathbb{S}$ is called the \emph{safe set}. 
\end{definition}

A set $\mathbb{S}\subset\mathbb{R}^{n}$ is associated with a function $h\in\mathfrak{C}(\mathbb{R}^{n}, \mathbb{R})$, and thus the set $\mathbb{S}\subset\mathbb{R}^{n}$ can be described explicitly as 
\begin{align}
\label{eqn-3}
\mathbb{S}:=\left\{x\in\mathbb{R}^{n}: h(x)\geq0\right\}.
\end{align}
Let $\partial\mathbb{S}:=\{x\in\mathbb{R}^{n}: h(x)=0\}$ and $\inte(\mathbb{S}):=\{x\in\mathbb{R}^{n}: h(x)>0\}$. Assume that $\inte(\mathbb{S})\neq\varnothing$ and $\widehat{\inte(\mathbb{S})}=\mathbb{S}$.

For the system \eqref{eqn-1}, we aims to design optimal controllers to ensure the systems stabilization and safety. To this end, both CLF $V\in\mathfrak{C}(\mathbb{R}^{n}, \mathbb{R}_{+})$ and CBF $B\in\mathfrak{C}(\mathbb{R}^{n}, \mathbb{R})$ are proposed and combined together via optimization problems; see, e.g., \cite{Ames2016control, Jankovic2018robust}. In this way, the optimal controllers can be designed. However, the Lie derivatives $L_{g}V(x):=\frac{\partial V(x)}{\partial x}g(x)$ and $L_{g}B(x):=\frac{\partial B(x)}{\partial x}g(x)$ are required to be non-zero such that the optimization problems can be solved, thereby resulting in additional constraints on the applicability of the CLF/CBF and the availability of dynamical systems. Hence, in this paper novel high-order CBFs and CLF are proposed to guarantee the stabilization and safety objectives. 

\section{High-Order Control Lyapunov and Barrier Functions}
\label{sec-HOCF}

In this section, novel high-order control Lyapunov and barrier functions are proposed and further implemented to design the safety and stabilizing controllers in an explicit way. 

\subsection{High-Order Control Lyapunov Function}
\label{subsec-HOCLF}

Motivated by vector CLFs in \cite{Ren2023vector}, a novel HOCLF is proposed for the stabilization control in the high-order cases. 

\begin{definition}
\label{def-5}
Consider the system \eqref{eqn-1}. A function $V\in\mathfrak{C}^{p}(\mathbb{R}^{n},$ $\mathbb{R}_{+})$ with $p\geq2$ is called a \emph{high-order control Lyapunov function (HOCLF)}, if for all $x\in\mathbb{R}^{n}$, 
\begin{enumerate}[(i)]
  \item $L_{g}V(x)=\ldots=L_{g}L^{p-2}_{f}V(x)=0$, $L_{g}L^{p-1}_{f}V(x)\neq0$, and there exist $\alpha_{1}, \alpha_{2}\in\mathcal{K}_{\infty}$ such that
  \begin{equation}
  \label{eqn-12}
  \alpha_{1}(|x|)\leq V(x)\leq\alpha_{2}(|x|); 
  \end{equation}
  
  \item  there exists $\Lambda=(\lambda_{1}, \ldots, \lambda_{p})\in\mathbb{R}^{p\times p}$ such that
  \begin{equation}
  \label{eqn-13}
  \inf\nolimits_{u\in\mathbb{U}}\left\{L_{f}\mathbf{V}(x)+L_{g}\mathbf{V}(x)u\right\}\preceq\Lambda\mathbf{V}(x), 
  \end{equation}
  where $\mathbf{V}(x):=(V(x), L_{f}V(x), \ldots, L^{p-1}_{f}V(x))$, $L_{f}\mathbf{V}(x):=(L_{f}V(x), L^{2}_{f}V(x), \ldots, L^{p}_{f}V(x))\in\mathbb{R}^{p}$ and $L_{g}\mathbf{V}(x):=(0, \ldots,$ $0, L_{g}L^{p-1}_{f}V(x))\in\mathbb{R}^{p\times m}$;
  
  \item the matrix $\Lambda\in\mathbb{R}^{p\times p}$ is Hurwitz, and there exist orthogonal matrices $\Omega:=(\omega_{1}, \ldots, \omega_{p})\in\mathbb{R}^{p\times p}, \Pi:=(\pi_{1}, \ldots, \pi_{r})\in\mathbb{R}^{p\times p}$ and a Jordan matrix $\bar{\Lambda}\in\mathbb{R}^{p\times p}$ such that $\Lambda=\Omega^{\top}\bar{\Lambda}\Pi$ and $\sum^{p}_{i=1}\omega^{\top}_{1}\pi_{i}\mathbf{V}_{i}(x_{0})>0$.
\end{enumerate}
\end{definition}

Definition \ref{def-5} is different from the one in \cite{Frauenfelder2023decentralized} based on the same mechanism as in \cite{Xiao2021high} for HOCBFs. In Definition \ref{def-5}, all the derivatives of $V$ are coupled via a vector inequality \eqref{eqn-13}, while the lasso-like functions are involved in \cite{Frauenfelder2023decentralized}. The matrix $\Lambda\in\mathbb{R}^{p\times p}$ in \eqref{eqn-13} needs to satisfy item (iii), whose necessity is presented in the proof of Theorem \ref{thm-2}. A special case is 
\begin{align}
\label{eqn-14}
\Lambda=\begin{bmatrix}
\lambda^{\top}_{1}\\\vdots \\ \lambda^{\top}_{r}
\end{bmatrix}=\begin{bNiceArray}{c|cw{c}{1cm}c}[margin]
0& \Block{3-3}{I} & &  \\
\vdots & & &  \\
0 & & & \\
\hline
\lambda_{p1} &\lambda_{p2} & \cdots& \lambda_{pp}
\end{bNiceArray}, 
\end{align}
which is based on the nature of $\mathbf{V}(x)$ and only imposes the constraint on the derivative of $L^{p-1}_{f}V(x)$. Based on the proposed HOCLF, the following result is derived to design the stabilizing controller for the high-order cases. 

\begin{theorem}
\label{thm-2}
If the system \eqref{eqn-1} admits an HOCLF $V\in\mathfrak{C}^{p}(\mathbb{R}^{n},$ $\mathbb{R}_{+})$, then the closed-loop system is GAS under the controller
\begin{align}
\label{eqn-15}
u(x):=\left\{\begin{aligned}
&0, &&\text{if\ } x=0, \\
&\kappa(\mathfrak{a}(x), \mathfrak{b}(x)), &&  \text{otherwise}, 
\end{aligned}\right.
\end{align}
where $\mathfrak{a}(x):=L^{p}_{f}V(x)-\lambda^{\top}_{p}\mathbf{V}(x), \mathfrak{b}(x):=L_{g}L^{p-1}_{f}V(x)$, and 
\begin{equation}
\label{eqn-11}
\kappa(\mathfrak{a}(x), \mathfrak{b}(x)):=\frac{\mathfrak{a}(x)+\sqrt{\mathfrak{a}^{2}(x)+|\mathfrak{b}(x)|^{4}}}{-|\mathfrak{b}(x)|^{2}}\mathfrak{b}^{\top}(x).
\end{equation}
\end{theorem}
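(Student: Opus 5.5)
The plan is to show that the closed loop under \eqref{eqn-15} satisfies the vector differential inequality $\dot{\mathbf{V}}(x(t))\preceq\Lambda\mathbf{V}(x(t))$. We then use a comparison argument to deduce that $V(x(t))$ decays to zero, and transfer this decay to $|x(t)|$ through the sandwich bound \eqref{eqn-12}.

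\emph{Step 1: the controller enforces the last row of \eqref{eqn-13}.} By item (i), the derivative of $L^{i}_{f}V$ along \eqref{eqn-1} is $L^{i+1}_{f}V$ for $i=0,\ldots,p-2$. The only component affected by $u$ is therefore $\frac{d}{dt}L^{p-1}_{f}V=L^{p}_{f}V+\mathfrak{b}(x)u$. Substituting \eqref{eqn-11}, and using $\mathfrak{b}(x)\neq0$ (item (i)), gives
\begin{align*}
L^{p}_{f}V+\mathfrak{b}u-\lambda^{\top}_{p}\mathbf{V}=\mathfrak{a}-\mathfrak{a}-\sqrt{\mathfrak{a}^{2}+|\mathfrak{b}|^{4}}=-\sqrt{\mathfrak{a}^{2}+|\mathfrak{b}|^{4}}<0 .
\end{align*}
This is the usual Sontag-type computation. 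The first $p-1$ rows of \eqref{eqn-13} contain no input, so \eqref{eqn-13} forces $L^{i+1}_{f}V\le\lambda^{\top}_{i+1}\mathbf{V}$ pointwise for $i\le p-2$. Hence $\dot{\mathbf{V}}\preceq\Lambda\mathbf{V}$ holds along closed-loop trajectories for $x\neq0$. At $x=0$ we have $f(0)=0$ and $u=0$, so the origin is an equilibrium. Continuity of $u$ away from the origin follows from the smoothness of $\kappa$ when $\mathfrak{b}\neq0$, and local Lipschitzness would be checked as in Sontag's formula.

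\emph{Step 2: from the vector inequality to decay of $V$.} This is the main obstacle. The vector comparison lemma $\dot{\mathbf{V}}\preceq\Lambda\mathbf{V}\Rightarrow\mathbf{V}(t)\preceq e^{\Lambda t}\mathbf{V}(0)$ needs $\Lambda$ to be Metzler, and item (iii) does not assume that. It is also why a Hurwitz condition alone is insufficient. I would use the factorization $\Lambda=\Omega^{\top}\bar{\Lambda}\Pi$ and pass to transformed coordinates, for instance $\mathbf{W}:=\Pi\mathbf{V}$, together with the orthogonality of $\Omega$ and $\Pi$. The aim is to obtain a scalar combination whose dynamics are governed by a stable Jordan block, so that componentwise comparison applies block by block. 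In particular, the first component $\sum_{i}\omega^{\top}_{1}\pi_{i}\mathbf{V}_{i}$ should satisfy a scalar inequality of the form $\dot{z}\le\bar{\lambda}z$ plus terms that decay exponentially, where $\bar\lambda$ has negative real part.

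The sign condition $\sum_{i}\omega^{\top}_{1}\pi_{i}\mathbf{V}_{i}(x_{0})>0$ is what keeps this combination positive and comparable with $V$. With it, I would derive $V(x(t))\le c\,e^{-\mu t}\,\phi(\mathbf{V}(x_{0}))$ for some $\mu>0$ and a continuous $\phi$ vanishing at zero. If this transformation proves too delicate, the fallback is the special companion structure \eqref{eqn-14}. In that case $\mathbf{V}$ collects the successive derivatives of $V$, so the closed loop gives $V^{(p)}\le\lambda_{p}^{\top}(V,\dot V,\ldots,V^{(p-1)})$. This is a linear differential inequality with a Hurwitz characteristic polynomial, and it can be handled by comparison with the solution of the corresponding linear ODE.

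\emph{Step 3: the $\mathcal{KL}$ bound.} Each $|L^{i}_{f}V(x_{0})|$ is bounded by a class-$\mathcal{K}$ function of $|x_{0}|$, by continuity together with $L^{i}_{f}V(0)=0$, which follows from $V$ having a minimum at $0$ and $f(0)=0$. We then combine \eqref{eqn-12} with Step 2 to get $|x(t)|\le\alpha_{1}^{-1}\big(c\,e^{-\mu t}\tilde\alpha(|x_{0}|)\big)=:\beta(|x_{0}|,t)$, which belongs to $\mathcal{KL}$. This bound also rules out finite escape time, so solutions exist for all $t\ge0$ and the closed loop is GAS in the sense of Definition~\ref{def-1}.
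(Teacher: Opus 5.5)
Your Steps 1 and 3 are fine; Step 1 is the paper's own Sontag-type computation. The gap is Step 2, which you correctly single out but do not close.

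The paper closes it in three moves. It introduces $\dot\xi=\Lambda\xi$ with $\xi(0)=\mathbf{V}(x_0)$, reads a decaying bound on $\xi_1$ off item (iii), and invokes a vector comparison lemma to conclude $V(x(t))\le\xi_1(t)$. Your objection to this is valid: that comparison needs $\Lambda$ to be Metzler, and the companion form \eqref{eqn-14} with a Hurwitz last row never is. But neither of your two routes replaces it.

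Write the closed loop as $\dot{\mathbf{V}}=\Lambda\mathbf{V}-w(t)$ with slack $w(t)\succeq0$. Passing to $\mathbf{W}=\Pi\mathbf{V}$ gives $\dot{\mathbf{W}}=\Pi\Omega^{\top}\bar{\Lambda}\mathbf{W}-\Pi w$. The slack $\Pi w$ has no sign, because an orthogonal matrix preserves $\preceq$ only if it is a permutation. Moreover, $\Omega^{\top}\bar{\Lambda}\Pi$ is not a similarity unless $\Omega=\Pi$, so the $\mathbf{W}$-dynamics are not governed by $\bar{\Lambda}$ at all. There is nothing to compare block by block.

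Your fallback is the right place to look, but the claim that a Hurwitz characteristic polynomial lets you compare with the linear ODE is exactly the missing idea. In the companion case the first $p-1$ rows are identities. So $w(t)=\nu(t)e_p$, with $\nu:=\sqrt{\mathfrak{a}^{2}+|\mathfrak{b}|^{4}}\ge0$ along the trajectory and $e_i$ the $i$-th unit vector, and
\[
V(x(t))=e_1^{\top}e^{\Lambda t}\mathbf{V}(x_0)-\int_0^{t}G(t-s)\,\nu(s)\,ds,\qquad G(\tau):=e_1^{\top}e^{\Lambda\tau}e_p .
\]
Hurwitz only makes the first term decay. Discarding the integral requires $G\ge0$ on $[0,\infty)$, and Hurwitz does not give that. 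Already for $p=2$ with roots $-\sigma\pm i\vartheta$, $\vartheta\neq0$, the function $G(\tau)=e^{-\sigma\tau}\sin(\vartheta\tau)/\vartheta$ changes sign.

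The condition does hold when all characteristic roots $-y_i$ are real and negative, which is the setting of Proposition~\ref{prop-1}. In that case $G$ is a convolution of the exponentials $e^{-y_i\tau}>0$. For a general $\Lambda$ the analogous condition is $e_1^{\top}e^{\Lambda\tau}\succeq0$ for all $\tau\ge0$.

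Under such a condition you get $0\le V(x(t))\le|e_1^{\top}e^{\Lambda t}|\,|\mathbf{V}(x_0)|\le c\,e^{-\mu t}|\mathbf{V}(x_0)|$ directly. The sign condition of item (iii) is not needed for this, since $V\ge0$. Your Step 3 then turns this into the $\mathcal{KL}$ bound.
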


\begin{IEEEproof}
From the controller \eqref{eqn-15}, if $x=0$, then $u(x)=0$ and thus $L^{p}_{f}V(x)+L_{g}L^{p-1}_{f}V(x)u-\lambda^{\top}_{p}\mathbf{V}(x)=0$; otherwise, 
\begin{align*}
&L^{p}_{f}V(x)+L_{g}L^{p-1}_{f}V(x)u-\lambda^{\top}_{p}\mathbf{V}(x) \\
&=\mathfrak{a}(x)-\mathfrak{d}(x)\frac{\mathfrak{a}(x)\mathfrak{b}^{\top}(x)}{|\mathfrak{b}(x)|^{2}}-\mathfrak{b}(x)\frac{\sqrt{\mathfrak{a}^{2}(x)+|\mathfrak{b}(x)|^{4}}}{|\mathfrak{b}(x)|^{2}}\mathfrak{b}^{\top}(x) \\
&=-\sqrt{\mathfrak{a}^{2}(x)+|\mathfrak{b}(x)|^{4}}\leq0.
\end{align*}
In addition, $L_{g}V(x)=\ldots=L_{g}L^{p-2}_{f}V(x)=0$ and $L_{g}L^{p-1}_{f}V(x)\neq0$. Hence, for all $x\in\mathbb{R}^{n}$, $L_{f}\mathbf{V}(x)+L_{g}\mathbf{V}(x)u\preceq\Lambda\mathbf{V}(x)$, that is, $\dot{\mathbf{V}}(x)\preceq\Lambda\mathbf{V}(x)$. 

We define the following comparison system:
\begin{align*}
\dot{\xi}(t)&=\Lambda\xi(t), \quad  \xi(0)=\mathbf{V}(x_{0})\in\mathbb{R}^{p}.
\end{align*}
From item (iii) of Definition \ref{def-5}, the first component of $\xi(t)$ satisfies $\xi_{1}(t)\leq e^{\lambda_{\max}(\bar{\Lambda})t}\sum^{p}_{i=1}\omega^{\top}_{1}\pi_{i}\xi_{i}(0)$ and $\lambda_{\max}(\bar{\Lambda})<0$, where $\lambda_{\max}(\bar{\Lambda})$ is the maximal eigenvalue of $\bar{\Lambda}$. That is, $\xi(t)$ decreases to zero as $t\rightarrow\infty$ asymptotically. From the comparison principle in \cite[Lem. 1]{Ren2018vector}, $V(x)\leq\xi_{1}(t)$ for all $t>0$, and thus the closed-loop system is GAS. 
\end{IEEEproof}

From the proof of Theorem \ref{thm-2} we can the necessity of item (iii) of Definition \ref{def-5}. Note that $V(x_{0})>0$ is true, while whether $L_{f}V(x_{0}), \ldots, L^{p-1}_{f}V(x_{0})$ are positive is unknown. Hence, item (iii) of Definition \ref{def-5} is needed for the convergence of $V(x)$ and constrains the choice of the initial state. On the other, the controller \eqref{eqn-15} is continuous in the region away from the origin, and can be continuous at the origin by introducing the high-order small control property (HOSCP): in a small region around the origin, there exists a bounded control input such that \eqref{eqn-13} is satisfied. With the HOSCP, we can follow \cite{Sontag1989universal} to show the continuity of the controller \eqref{eqn-15} at the origin. 

\subsection{Reciprocal High-Order Control Barrier Function}
\label{subsec-HOCBF}

Regarding the safety control, the existing HOCBFs are recalled first, and then a reciprocal HOCBF is proposed.  

\begin{definition}[\cite{Tan2021high}]
\label{def-3}
Consider the system \eqref{eqn-1} and a set $\mathbb{D}\subset\mathbb{R}^{n}$. A function $h\in\mathfrak{C}^{r}(\mathbb{R}^{n}, \mathbb{R})$ has \emph{least relative degree} $r\in\mathbb{N}$ in $\mathbb{D}$ for the system \eqref{eqn-1}, if $L_{g}L^{k}_{f}h(x)\equiv0$ for all $x\in\mathbb{D}$ and $k=0, \ldots, r-2$, where $L^{k}_{f}h(x):=\frac{\partial^{k} h(x)}{\partial x^{k}}f(x)$.
\end{definition}

If the function $h$ in \eqref{eqn-3} has least relative degree $r\in\mathbb{N}$, then $h$ can be defined as the high-order control barrier function (HOCBF) via the following mechanism \cite{Tan2021high, Xiao2021high}. First, for all $i=1, \ldots, r$,  we define the functions:
\begin{align}
\label{eqn-4}
\psi_{0}(x)&:=h(x), \quad \psi_{i}(x):=\left(\frac{d}{dt}+\gamma_{i}\right)\psi_{i-1}(x), 
\end{align}
where $\gamma_{i}\in\mathcal{K}_{e}$. Next, the following sets are defined:
\begin{align}
\label{eqn-5}
\mathbb{S}_{i}&:=\{x\in\mathbb{R}^{n}: \psi_{i-1}(x)\geq0\}, \quad i=1, \ldots, r,  \\
\label{eqn-6}
\overbar{\mathbb{S}}&:=\cap^{r}_{i=1}\mathbb{S}_{i}. 
\end{align}
In the high-order cases, $\overbar{\mathbb{S}}$ is taken as the \emph{safe set}. From \eqref{eqn-3}-\eqref{eqn-6}, we can see that $\overbar{\mathbb{S}}\subseteq\mathbb{S}_{i}$ and $\overbar{\mathbb{S}}\subseteq\mathbb{S}$. That is, a smaller safe set is addressed for the high-order cases. Finally, from \cite{Xiao2021high}, $h$ is an HOCBF, if there exists $\mathbb{D}\subset\mathbb{R}^{n}$ such that $\overbar{\mathbb{S}}\subset\mathbb{D}$ and $\psi_{r}(x)\geq0$ for all $x\in\mathbb{D}$. If the HOCBF $h$ does exist, then the set $\overbar{\mathbb{S}}$ is guaranteed to be FI. 

From the above mechanism, the HOCBF is based on the function describing the safe set, and thus $h$ is called the \emph{zeroing HOCBF}; see also \cite{Xiao2021high}. Note that $h$ may be simple and cannot be differentiable appropriately; see \cite{Ames2016control} for examples. Another way to introduce the HOCBF is based on the function $h$ implicitly, and the resulting HOCBF is called the \emph{reciprocal HOCBF}. Following this direction, we define the following functions. For all $i=1, \ldots, r$, 
\begin{align}
\label{eqn-7}
\hspace{-10pt}\phi_{0}(x)&:=B(x), \  \frac{1}{\phi_{i}(x)}:=\eta_{i}\left(\frac{1}{\phi_{i-1}(x)}\right)-\frac{d\phi_{i-1}(x)}{dt}, 
\end{align}
where $B\in\mathfrak{C}(\mathbb{R}^{n}, \mathbb{R})$ is a function related to $h$, and $\eta_{i}\in\mathcal{K}_{e}$. From \eqref{eqn-7}, a novel HOCBF is proposed below. 

\begin{definition}
\label{def-4}
Consider the system \eqref{eqn-1} and the set $\overbar{\mathbb{S}}$ in \eqref{eqn-6}. A function $B\in\mathfrak{C}^{r}(\inte(\overbar{\mathbb{S}}), \mathbb{R})$ is called a \emph{high-order control barrier function (HOCBF-I)}, if for all $x\in\inte(\overbar{\mathbb{S}})$,
\begin{enumerate}[(i)]
  \item $L_{g}L^{r-1}_{f}B(x)\neq0$ and $L_{g}L^{k}_{f}B(x)=0$, $k=0, \ldots, r-2$;

  \item  there exist $\alpha_{i1}, \alpha_{i2}\in\mathcal{K}_{e}$ such that for $i=0, \ldots, r-1$, 
  \begin{align}
  \label{eqn-8}
  \hspace{-20pt}\alpha_{i1}(\psi_{i}(x))\leq\frac{1}{\phi_{i}(x)}&\leq\alpha_{i2}(\psi_{i}(x)), \\
  \label{eqn-9}
  \hspace{-20pt}\inf_{u\in\mathbb{U}}\{L_{f}\phi_{r-1}(x)+L_{g}\phi_{r-1}(x)u\}&\leq\eta_{r}\left(\frac{1}{\phi_{r-1}(x)}\right), 
  \end{align}
  where $\eta_{r}$ is given in \eqref{eqn-7}. 
\end{enumerate}
\end{definition}

Definition \ref{def-4} extends the reciprocal CBF in \cite{Ames2016control} to the high-order cases. Item (i) is from $h(x)$ of least relative order $r$. In particular, if $B(x)$ is defined via $h(x)$ (see, e.g., \cite{Ames2016control}), then let $B(x):=\mathfrak{F}(h(x))$ with $\mathfrak{F}\in\mathfrak{C}^{r}(\mathbb{R}, \mathbb{R})$, and thus $L_{g}B(x)=\frac{\partial\mathfrak{F}(h)}{\partial h}L_{g}h(x)=0$. Based on the chain rule for derivatives and the property of $h(x)$, we can derive $L_{g}B(x)=\ldots=L_{g}L^{r-2}_{f}B(x)=0$ and $L_{g}L^{r-1}_{f}B(x)\neq0$. In item (ii), \eqref{eqn-8} shows the relations between $\phi_{i}(x)$ and $\psi_{i}(x)$. If $i=0$, then \eqref{eqn-8} is the same as the one in \cite{Ames2016control} to show the relation between $B(x)$ and $h(x)$. \eqref{eqn-9} is for the $r$-th order case and imposes the constraint on the control input. From the definition of $\phi_{i}(x)$, we can see that \eqref{eqn-9} equals to $\dot{\phi}_{r-1}(x)\leq\eta_{r}(1/\phi_{r-1}(x))$, and thus $\phi_{r}(x)\geq0$. The next theorem presents the safety controller design from the HOCBF-I. 

\begin{theorem}
\label{thm-1}
Consider the system \eqref{eqn-1} admitting an HOCBF-I $B\in\mathfrak{C}^{r}(\inte(\overbar{\mathbb{S}}), \mathbb{R})$ with the set $\overbar{\mathbb{S}}$ in \eqref{eqn-6}. The system \eqref{eqn-1} is safe with respect to $\overbar{\mathbb{S}}$ under the following controller
\begin{align}
\label{eqn-10}
u(x):=\left\{\begin{aligned}
&0, &&\text{if\ } x\equiv0, \\
&\kappa(\mathfrak{c}(x), \mathfrak{d}(x)), &&  \text{otherwise}, 
\end{aligned}\right.
\end{align}
where $\mathfrak{c}(x):=L_{f}\phi_{r-1}(x)-\eta_{r}(1/\phi_{r-1}(x))$, $\mathfrak{d}(x):=L_{g}\phi_{r-1}(x)$ and $\kappa$ is defined in \eqref{eqn-11}.
\end{theorem}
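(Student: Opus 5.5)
The plan mirrors the proof of Theorem \ref{thm-2}. First I check that the controller \eqref{eqn-10} enforces \eqref{eqn-9} pointwise. Wherever $\mathfrak{d}(x)\neq0$ (guaranteed on $\inte(\overbar{\mathbb{S}})$ by item (i) of Definition \ref{def-4}, since $L_{g}\phi_{r-1}$ is a nonzero multiple of $L_{g}L^{r-1}_{f}B$ through the chain rule applied to \eqref{eqn-7}), substituting $u=\kappa(\mathfrak{c}(x),\mathfrak{d}(x))$ gives
\begin{align*}
L_{f}\phi_{r-1}(x)+L_{g}\phi_{r-1}(x)u-\eta_{r}\left(\frac{1}{\phi_{r-1}(x)}\right)=-\sqrt{\mathfrak{c}^{2}(x)+|\mathfrak{d}(x)|^{4}}\leq0.
\end{align*}
This is the same Sontag-type computation as in Theorem \ref{thm-2}. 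At $x=0$ I will argue separately, via \eqref{eqn-9} and $f(0)=g(0)=0$, that $u=0$ gives the same inequality. Hence along closed-loop trajectories in $\inte(\overbar{\mathbb{S}})$ we have $\dot{\phi}_{r-1}\leq\eta_{r}(1/\phi_{r-1})$, that is, $1/\phi_{r}(x)\geq0$ by \eqref{eqn-7}.

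Next I run a backward induction from $i=r$ down to $i=0$ to show each $1/\phi_{i}$ stays positive. Fix $i\in\{1,\ldots,r\}$ and suppose $1/\phi_{i}(x(t))\geq0$ along the trajectory. By \eqref{eqn-7} this gives $\dot{\phi}_{i-1}\leq\eta_{i}(1/\phi_{i-1})$. Since $\phi_{i-1}(x_{0})>0$ for $x_{0}\in\inte(\overbar{\mathbb{S}})$ (from \eqref{eqn-8} and $\psi_{i-1}(x_{0})>0$), I then use the reciprocal-CBF argument of \cite{Ames2016control}. While $\phi_{i-1}$ is large, $1/\phi_{i-1}$ is small and the growth rate $\eta_{i}(1/\phi_{i-1})$ is bounded. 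A comparison lemma applied to $\dot{y}=\eta_{i}(1/y)$ then shows that $\phi_{i-1}$ stays finite and positive on every finite time interval. Consequently $1/\phi_{i-1}(x(t))>0$ for all $t$ in the maximal interval of existence. Using the left inequality of \eqref{eqn-8} together with $\alpha_{i-1,1}\in\mathcal{K}_{e}$, I then convert the bound on $\phi_{i-1}$ into $\psi_{i-1}(x(t))>0$, and in particular $x(t)\in\mathbb{S}_{i}$. At $i=1$ this yields $B$ bounded and $\psi_{0}=h>0$.

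Finally, all $\psi_{i-1}>0$ for $i=1,\ldots,r$ on the solution interval, so $x(t)\in\overbar{\mathbb{S}}$ by \eqref{eqn-6}. The trajectory therefore cannot reach $\partial\overbar{\mathbb{S}}$ in finite time, because approaching it would force some $\phi_{i-1}\to\infty$, which contradicts the finite-time bound. Forward invariance, and hence safety in the sense of Definition \ref{def-2}, follows.

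I expect the main obstacle to be the comparison step for the scalar inequality $\dot{\phi}\leq\eta(1/\phi)$. It must rule out blow-up in finite time while $\eta_{i}$ is only an extended class-$\mathcal{K}$ function, and it must handle initial states on $\partial\overbar{\mathbb{S}}$, where $B$ is undefined. I would first try the explicit argument of \cite[Thm.~1]{Ames2016control}: compare with the solution of $\dot{y}=\eta_{i}(1/y)$, which is globally defined for $y>0$. For boundary initial states I would restrict to $x_{0}\in\inte(\overbar{\mathbb{S}})$ and then pass to the closure using $\widehat{\inte(\mathbb{S})}=\mathbb{S}$ and continuity of solutions.
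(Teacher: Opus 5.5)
Your plan is essentially the paper's proof. You enforce \eqref{eqn-9} at level $r$ (you verify the Sontag-type identity for \eqref{eqn-10} explicitly, which the paper leaves implicit), then iterate backward through \eqref{eqn-7} with the reciprocal-barrier comparison step, and finally convert to $\psi_{i-1}>0$ via \eqref{eqn-8}; the paper phrases the comparison step via $\mathsf{H}_{i-1}=1/\phi_{i-1}$ and a $\mathcal{KL}$ lower bound $\mathsf{H}_{i-1}(x(t))\geq\beta(\mathsf{H}_{i-1}(x_{0}),t)>0$, which is your comparison with $\dot{y}=\eta_{i}(1/y)$ after the substitution $z=1/y$.

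Two small corrections. First, that last conversion needs the right-hand inequality $1/\phi_{i-1}\leq\alpha_{(i-1)2}(\psi_{i-1})$; the left one only bounds $\psi_{i-1}$ from above, and it is what gives $\phi_{i-1}(x_{0})>0$. Second, the paper's proof only establishes forward invariance of $\inte(\overbar{\mathbb{S}})$, so your closure step for boundary initial states is not needed. It would also not work via continuity of solutions, since the $\phi_{i}$, and hence the controller \eqref{eqn-10}, are only defined on $\inte(\overbar{\mathbb{S}})$.
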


\begin{IEEEproof}
First, let $\mathsf{H}_{r-1}(x):=1/\phi_{r-1}(x)$, and  we have 
\begin{align*}
\dot{\mathsf{H}}_{r-1}(x)&=-(L_{f}\phi_{r-1}(x)+L_{g}\phi_{r-1}(x)u)/\phi^{2}_{r-1}(x).
\end{align*}
From \eqref{eqn-9}, we have
\begin{align*}
\dot{\mathsf{H}}_{r-1}(x)&\geq-\eta_{r}(1/\phi_{r-1}(x))/\phi^{2}_{r-1}(x) \\
&=-\mathsf{H}^{2}_{r-1}(x)\eta_{r}(\mathsf{H}_{r-1}(x)) \\
&=:-\bar{\eta}_{r}(\mathsf{H}_{r-1}(x)), 
\end{align*}
where $\bar{\eta}_{r}$ is a continuous function. In addition, $\bar{\eta}_{r}(0)=0$ and $\bar{\eta}_{r}(\mathsf{H}_{r-1}(x))>0$ if and only if $\mathsf{H}_{r-1}(x)>0$. Since $x_{0}\in\inte(\overbar{\mathbb{S}})$, $\mathsf{H}_{r-1}(x_{0})>0$ holds from \eqref{eqn-8} of Definition \ref{def-4}. From the comparison principle \cite[Sec. 3.4]{Khalil2002nonlinear} and \cite[Lem. 4.4]{Khalil2002nonlinear}, there exists $\beta\in\mathcal{KL}$ such that $\mathsf{H}_{r-1}(x)\geq\beta(\mathsf{H}_{r-1}(x_{0}), t)$ for all $t\geq0$. From \eqref{eqn-8}, we have 
\begin{align*}
\psi_{r-1}(x)\geq\alpha^{-1}_{(r-1)2}(\beta(\alpha_{(r-1)1}(\psi_{r-1}(x_{0})), t)). 
\end{align*}
Hence, $\psi_{r-1}(x)>0$. That is, $\inte(\mathbb{S}_{r})\neq\varnothing$ and thus $\mathbb{S}_{r}$ is FI. 

Next, since $\mathsf{H}_{r-1}(x)>0$ for all $x\in\inte(\overbar{\mathbb{S}})$, from the definitions of $\mathsf{H}_{r-1}(x)$ and $\phi_{r-1}(x)$, we have 
\begin{align*}
\dot{\phi}_{r-2}(x)<\eta_{r-1}(1/\phi_{r-2}(x)). 
\end{align*}
Let $\mathsf{H}_{r-2}(x):=1/\phi_{r-2}(x)$, and thus 
\begin{align*}
\dot{\mathsf{H}}_{r-2}(x)&=-\dot{\phi}_{r-2}(x)/\phi^{2}_{r-2}(x)\\
&>-\mathsf{H}^{2}_{r-2}(x)\eta_{r-1}(\mathsf{H}_{r-2}(x)). 
\end{align*}
Following the same mechanism as in the case of $r-1$, we can show that $\inte(\mathbb{S}_{r-1})\neq\varnothing$ and $\mathbb{S}_{r-1}$ is FI. 

Finally, the above mechanism can be implemented iteratively, and thus for all $i=1, \ldots, r$, $\inte(\mathbb{S}_{i})\neq\varnothing$ and $\mathbb{S}_{i}$ is FI. Hence, the set $\inte(\overbar{\mathbb{S}})$ is FI under the controller \eqref{eqn-10}. 
\end{IEEEproof}

Theorem \ref{thm-1} shows how to design the safety controller from the HOCBF-I in Definition \ref{def-4}, and thus extends the existing results in \cite{Tan2021high, Xiao2021high} from the zeroing type to the reciprocal type. Hence, the HOCBF is allowed to be different from $h$ in \eqref{eqn-3}. On the other hand, similar to \cite{Tan2021high, Xiao2021high}, here the FI property is for the set $\overbar{\mathbb{S}}$ in \eqref{eqn-6}, which is a subset of $\mathbb{S}$ in \eqref{eqn-3}. How to guarantee the FI property of the set $\mathbb{S}$ in the high-order cases is addressed in the following. 

\subsection{HOCBF with respect to the Set \eqref{eqn-3}}
\label{subsec-discussion}

To ensure the FI property of the set $\mathbb{S}$ in \eqref{eqn-3}, we following Section \ref{subsec-HOCLF} to propose another novel HOCBF as follows. 

\begin{definition}
\label{def-6}
Consider the system \eqref{eqn-1} and the set $\mathbb{S}$ in \eqref{eqn-3}. A function $B\in\mathfrak{C}^{r}(\inte(\mathbb{S}), \mathbb{R})$ is called a \emph{high-order control barrier function (HOCBF-II)}, if for all $x\in\inte(\mathbb{S})$, item (i) of Definition \ref{def-4} holds and 
\begin{enumerate}[(i)]
  \item  there exist $\alpha_{1}, \alpha_{2}\in\mathcal{K}_{e}$ and $\Theta=(\theta_{1}, \ldots, \theta_{r})\in\mathbb{R}^{r \times r}$ such that
  \begin{align}
  \label{eqn-16}
  \alpha_{1}(h(x))\leq\frac{1}{B(x)}&\leq\alpha_{2}(h(x)), \\
  \label{eqn-17}
  \inf\nolimits_{u\in\mathbb{U}}\left\{L_{f}\mathbf{B}(x)+L_{g}\mathbf{B}(x)u\right\}&\preceq\Theta^{\top}\mathbf{B}(x), 
  \end{align}
  where $\mathbf{B}(x):=(B(x), L_{f}B(x), \ldots, L^{r-1}_{f}B(x))\in\mathbb{R}^{r}$;
  \item 
  there exist orthogonal matrices $\Omega:=(\omega_{1}, \ldots, \omega_{p})\in\mathbb{R}^{p\times p}, \Pi:=(\pi_{1}, \ldots, \pi_{r})\in\mathbb{R}^{p\times p}$ and a Jordan matrix $\bar{\Theta}\in\mathbb{R}^{r\times r}$ such that $\Theta=\Omega^{\top}\bar{\Theta}\Pi$ and $\sum^{r}_{i=1}\omega^{\top}_{1}\pi_{i}\mathbf{B}_{i}(x_{0})>0$.
\end{enumerate}
\end{definition}

Different from  Definition \ref{def-4} for the set $\bar{\mathbb{S}}$ in \eqref{eqn-6}, Definition \ref{def-6} is for the set $\mathbb{S}$ in \eqref{eqn-3}, which is larger than the one in \eqref{eqn-6}. 

\begin{theorem}
\label{thm-3}
Consider the system \eqref{eqn-1} admitting an HOCBF-II $B\in\mathfrak{C}^{r}(\inte(\mathbb{S}), \mathbb{R})$ with the set $\mathbb{S}$ in \eqref{eqn-3}. The system \eqref{eqn-1} is safe with respect to $\mathbb{S}$ under the following controller
\begin{align}
\label{eqn-18}
u(x):=\left\{\begin{aligned}
&0, &&\text{if\ } x=0, \\
&\kappa(\bar{\mathfrak{c}}(x), \bar{\mathfrak{d}}(x)), &&  \text{otherwise}, 
\end{aligned}\right.
\end{align}
where $\bar{\mathfrak{c}}(x):=L^{r}_{f}B(x)-\theta^{\top}_{r}\mathbf{B}(x), \bar{\mathfrak{d}}(x):=L_{g}L^{r-1}_{f}B(x)$, and $\kappa$ is defined in \eqref{eqn-11}.
\end{theorem}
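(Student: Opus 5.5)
The proof follows the proof of Theorem \ref{thm-2}, with the scalar comparison argument from Theorem \ref{thm-1} used to convert a bound on $B$ into forward invariance of $\mathbb{S}$. There are three steps.

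\textbf{Step 1: the vector inequality holds along trajectories.} We show that the controller \eqref{eqn-18} gives $\dot{\mathbf{B}}(x)\preceq\Theta^{\top}\mathbf{B}(x)$ along closed-loop trajectories in $\inte(\mathbb{S})$.

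By item (i) of Definition \ref{def-4}, $L_{g}L^{k}_{f}B=0$ for $k\le r-2$. Hence the first $r-1$ components of $\dot{\mathbf{B}}$ do not depend on $u$ and equal $L^{k+1}_{f}B$. Their inequalities in \eqref{eqn-17} therefore hold for every $u$, in particular for $u(x)$.

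For the last component, substitute \eqref{eqn-11} with $\bar{\mathfrak{c}},\bar{\mathfrak{d}}$. The same algebra as in the proof of Theorem \ref{thm-2} gives
\begin{align*}
L^{r}_{f}B(x)+L_{g}L^{r-1}_{f}B(x)u(x)-\theta^{\top}_{r}\mathbf{B}(x)=-\sqrt{\bar{\mathfrak{c}}^{2}(x)+|\bar{\mathfrak{d}}(x)|^{4}}\le0.
\end{align*}
This uses $\bar{\mathfrak{d}}(x)\neq0$ on $\inte(\mathbb{S})$. The case $x=0$ is handled as in Theorem \ref{thm-2}.

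\textbf{Step 2: a positive lower bound on $B$.} Introduce the comparison system $\dot{\xi}=\Theta^{\top}\xi$ with $\xi(0)=\mathbf{B}(x_{0})$. Invoke the vector comparison principle \cite[Lem. 1]{Ren2018vector}, exactly as in Theorem \ref{thm-2}, to get $B(x(t))\le\xi_{1}(t)$.

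Next use the decomposition $\Theta=\Omega^{\top}\bar{\Theta}\Pi$ from item (ii) of Definition \ref{def-6} to write $\xi_{1}(t)$ via the Jordan form. The sum $\sum_{i}\omega_{1}^{\top}\pi_{i}\mathbf{B}_{i}(x_{0})>0$ controls $\xi_{1}$, so that $\xi_{1}(t)$ stays finite and positive, bounded by an exponential envelope, on every finite interval.

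Then $1/B(x(t))\ge1/\xi_{1}(t)>0$, and \eqref{eqn-16} gives $\alpha_{2}(h(x(t)))\ge 1/B(x(t))>0$. Since $\alpha_{2}\in\mathcal{K}_{e}$, this yields $h(x(t))>0$.

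\textbf{Step 3: excluding escape through the boundary.} Argue by contradiction. Suppose the trajectory reaches $\partial\mathbb{S}$ at a first time $t^{*}<\infty$. As $t\to t^{*}$, $h(x(t))\to0$, so $1/B\to0$ by the left inequality of \eqref{eqn-16}, and hence $B\to+\infty$. This contradicts $B(x(t))\le\xi_{1}(t)$ with $\xi_{1}$ bounded on $[0,t^{*}]$. Thus $\inte(\mathbb{S})$, and so $\mathbb{S}$ (using $\widehat{\inte(\mathbb{S})}=\mathbb{S}$), is FI.

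\textbf{Main obstacle.} The delicate point is the sign and growth of $\xi_{1}$, since $\Theta$ is not assumed Hurwitz. The argument only needs $B\le\xi_{1}$ to be finite on finite horizons, which holds for any linear system. So I would lean on this boundedness rather than on positivity of $\xi_{1}$. The positivity condition in item (ii) would be used only to ensure $\xi_{1}>0$, so that the bound is meaningful and consistent with $B>0$ on $\inte(\mathbb{S})$.

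A second subtlety is applying the vector comparison lemma, which requires quasi-monotonicity (nonnegative off-diagonal entries of $\Theta^{\top}$). I would assume this as in \cite{Ren2018vector}, as holds for the companion form \eqref{eqn-14}, or invoke it exactly as Theorem \ref{thm-2} does.
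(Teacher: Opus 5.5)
Your proposal is correct and follows essentially the paper's route: the controller enforces $\dot{\mathbf{B}}\preceq\Theta^{\top}\mathbf{B}$ as in Theorem~\ref{thm-2}, the comparison system bounds $B(x(t))$ above by a positive function that is finite for every $t$, and the right inequality of \eqref{eqn-16} then keeps $h(x(t))$ away from zero; your first-exit-time argument just makes explicit what the paper leaves implicit. Two minor slips: in Step~3 it is the right inequality $1/B\le\alpha_{2}(h)$, not the left one, that forces $1/B\to0$ as $h\to0$; and the companion form is quasi-monotone only when the off-diagonal entries of its last row are nonnegative, so quasi-monotonicity is a genuine extra hypothesis that a Hurwitz companion matrix such as \eqref{eqn-14} never satisfies.
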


\begin{IEEEproof}
Following the similar fashion as in the proof of Theorem \ref{thm-2}, $\dot{\mathbf{B}}(x)\preceq\Theta\mathbf{B}(x)$ holds for all $x\in\inte(\mathbb{S})$. From item (ii) of Definition \ref{def-6}, $B(x(t))\leq e^{\lambda_{\max}(\bar{\Theta})t}\sum^{r}_{i=1}\omega^{\top}_{1}\pi_{i}\mathbf{B}_{i}(x_{0})=:\beta(B(x_{0}), t)$. Since $e^{\lambda_{\max}(\bar{\Theta})t}>0$ for any $\lambda_{\max}(\bar{\Theta})\in\mathbb{R}$, $\beta(B(x_{0}), t)>0$ holds for all $t>0$. From \eqref{eqn-16}, $h(x(t))>\alpha^{-1}_{2}(1/\beta(B(x_{0}), t))>0$ holds for all $t>0$. As a result, the system \eqref{eqn-1} is safe with respect to the set $\mathbb{S}$. 
\end{IEEEproof}

From Theorem \ref{thm-3}, $\mathbb{S}$ in \eqref{eqn-3} can be a safe set for the system \eqref{eqn-1} and thus a larger safe set can be ensured in the high-order cases. Regarding the two proposed HOCBFs, some further discussions are presented. First, the differences of the HOCBF-I and HOCBF-II are discussed below. The HOCBF-I is based on the lasso-like functions in \eqref{eqn-7} and thus a small safe set in \eqref{eqn-6} is ensured. The HOCBF-II is based on the techniques in Section \ref{subsec-HOCLF} and the safe set in \eqref{eqn-3} can be guaranteed via an additional condition on the initial state (i.e., item (ii) of Definition \ref{def-6}). As a result, both the HOCBF-I and HOCBF-II can be available, and which is to be chosen depends on the considered system. Second, similar to Definition \ref{def-5}, the assumption on the matrix $\Theta\in\mathbb{R}^{r\times r}$ is made to ensure the positiveness of $B(x)$ along the time line. Just like \eqref{eqn-14}, a special case of the matrix $\Theta$ is 
\begin{align}
\label{eqn-19}
\Theta=\begin{bmatrix}
\theta^{\top}_{1}\\\vdots \\ \theta^{\top}_{r}
\end{bmatrix}=\begin{bNiceArray}{c|cw{c}{1cm}c}[margin]
0& \Block{3-3}{I} & &  \\
\vdots & & &  \\
0 & & & \\
\hline
\theta_{r1} &\theta_{r2} & \cdots& \theta_{rr}
\end{bNiceArray}. 
\end{align}
In this special case, $\Theta$ is required to be Hurwitz, which will be discussed later. Finally, besides the comparison principles in the proofs of Theorems \ref{thm-2}-\ref{thm-3}, an alternative technique is based on the properties of $\lambda_{p}\in\mathbb{R}^{p}$ or $\theta_{r}\in\mathbb{R}^{r}$; see \cite{Xu2018constrained}. For the two special cases as in \eqref{eqn-14} and \eqref{eqn-19}, the following result shows the equivalence relation between the two methods. 

\begin{proposition}
\label{prop-1}
Consider the matrix $\Lambda$ in \eqref{eqn-14}. The following two statements are equivalent. 
\begin{enumerate}
  \item The polynomial $y^{p}+\lambda_{p1}y^{p-1}+\ldots+\lambda_{pp}=0$ has negative roots only. 
  \item The matrix $\Lambda$ is Hurwitz. 
\end{enumerate}
\end{proposition}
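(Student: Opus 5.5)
The plan is to identify the polynomial in item 1) with the characteristic polynomial of the companion matrix $\Lambda$ in \eqref{eqn-14}. Once that is done, the equivalence is immediate: by definition $\Lambda$ is Hurwitz if and only if every eigenvalue lies in the open left half-plane, and the eigenvalues of $\Lambda$ are exactly the roots of $\det(yI-\Lambda)$. So the whole proof reduces to computing $\det(yI-\Lambda)$ and matching it with $y^{p}+\lambda_{p1}y^{p-1}+\ldots+\lambda_{pp}$.

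\textbf{Step 1: compute the determinant.} I would compute $\det(yI-\Lambda)$ by induction on $p$, expanding along the first column. The matrix $yI-\Lambda$ has $y$ on the diagonal, $-1$ on the superdiagonal in rows $1,\ldots,p-1$, and last row $(-\lambda_{p1},\ldots,-\lambda_{p(p-1)},\,y-\lambda_{pp})$. The expansion gives
\begin{align*}
\det(yI-\Lambda)=y^{p}-\lambda_{pp}y^{p-1}-\lambda_{p(p-1)}y^{p-2}-\cdots-\lambda_{p1}.
\end{align*}
Equivalently, I could verify this directly: the vector $(1,y,\ldots,y^{p-1})$ is an eigenvector of $\Lambda$ for eigenvalue $y$ exactly when $y$ is a root of this polynomial. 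That argument also shows each eigenvalue has geometric multiplicity one, which is convenient but not needed here.

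\textbf{Step 2: match conventions.} This is the step I expect to be the main obstacle, because the polynomial in item 1) lists the coefficients in the opposite order and with the opposite sign. I would handle it by stating explicitly that the entries of the last row of \eqref{eqn-14} are understood as $-\lambda_{pp},\ldots,-\lambda_{p1}$, i.e.\ the standard companion form for the polynomial $y^{p}+\lambda_{p1}y^{p-1}+\ldots+\lambda_{pp}$. With that relabeling, Step 1 gives $\det(yI-\Lambda)=y^{p}+\lambda_{p1}y^{p-1}+\ldots+\lambda_{pp}$ exactly. I would also interpret ``negative roots only'' as ``all roots have negative real part'', since roots may be complex. This is the reading under which the proposition is a true equivalence and is consistent with the use of Hurwitz matrices in Definition~\ref{def-5}.

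\textbf{Step 3: conclude.} With the two polynomials identified, both directions follow at once. If every root of the polynomial has negative real part, then every eigenvalue of $\Lambda$ does, so $\Lambda$ is Hurwitz. Conversely, if $\Lambda$ is Hurwitz, every root of its characteristic polynomial, which is the given polynomial, has negative real part. I would close by remarking that the same argument applies verbatim to $\Theta$ in \eqref{eqn-19}. This links the comparison-based condition in Theorems~\ref{thm-2}--\ref{thm-3} to the polynomial-root condition used in \cite{Xu2018constrained}.
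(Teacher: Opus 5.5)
Your argument is correct, and your Step 2 is genuinely needed. Read literally, with the last row of \eqref{eqn-14} equal to $(\lambda_{p1},\ldots,\lambda_{pp})$, the claim already fails for $p=2$: taking $\lambda_{21}=-2$, $\lambda_{22}=-3$ makes $\Lambda$ Hurwitz, while $y^{2}-2y-3$ has the root $3$. Likewise, under a real-root reading of ``negative roots'', the implication 2)$\Rightarrow$1) fails for $y^{2}+y+1$.

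The paper takes a different route. It assumes the roots $-y_{1},\ldots,-y_{p}<0$ are given and works along $\dot\xi=\Lambda\xi$. It defines coordinates $z_{1}=\xi_{1}$, $z_{i}=(\frac{d}{dt}+y_{i-1})z_{i-1}$; this is the recursion the displayed $U$ actually encodes. It then exhibits an explicit similarity $K=U\Lambda U^{-1}$, with $U$ unit lower triangular and $K$ upper bidiagonal with diagonal $-y_{1},\ldots,-y_{p}$. Checking the last row of $K$, i.e.\ $\dot z_{p}=-y_{p}z_{p}$, requires $\prod_{j}(\frac{d}{dt}+y_{j})\xi_{1}=0$. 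That is exactly the identification of the given polynomial with the characteristic polynomial of $\Lambda$ that you establish in Steps 1--2, and the paper uses it without stating it.

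What the paper's route buys is a concrete triangularizing decomposition $\Lambda=U^{-1}KU$. It reuses this immediately afterwards to fill in item (iii) of Definition~\ref{def-5}, with $\bar{\Lambda}=K$, $\Omega=U$, $\Pi=U^{-1}$. Your determinant/eigenvector computation is shorter, does not require knowing the roots in advance, covers complex roots, and gives both implications at once. The paper, by contrast, only argues 1)$\Rightarrow$2) and dismisses the converse with ``in a similar way''.
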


\begin{IEEEproof}
From the first statement, let the roots of the polynomial be $-y_{1}, \ldots, -y_{p}<0$. We define the linear system 
\begin{align*}
\dot{\xi}&=\Lambda\xi, 
\end{align*}
and the notations 
\begin{align*}
z_{1}&=\xi_{1}, \quad z_{i}=\dot{\xi}_{i-1}+y_{i-1}\xi_{i-1}, \quad  i=2, \ldots, p. 
\end{align*}
The relation between $z$ and $\xi$ is given below 
\begin{align*}
z=U\xi, \quad  \dot{z}&=Kz, 
\end{align*}
where 
\begin{align*}
U&=\begin{bmatrix}
1 &0&0&\ldots&0&0\\ y_{1} &1&0&\ldots&0&0 \\y_{1}y_{2} &y_{1}+y_{2}&1&\ldots&0&0  \\ \vdots&\vdots&\vdots&\ddots&\vdots&\vdots \\  \prod^{p-1}_{i=1}y_{i} &\ldots&\ldots&\ldots&\sum^{p-1}_{i=1}y_{i}&1
\end{bmatrix},
\end{align*}
\begin{align*}
K&=\begin{bmatrix}
-y_{1} &1 &0&\ldots&0&0\\ 0& -y_{2} &1&\ldots&0&0\\ \vdots&\vdots&\vdots&\ddots&\vdots &\vdots\\  0 &0&0&\ldots&-y_{p-1} &1 \\ 0 &0&0&\ldots&0&-y_{p} 
\end{bmatrix}.
\end{align*}
Note that $U$ is  invertible and $K=U\Lambda U^{-1}$. Hence, the eigenvalues of $\Lambda$ and the roots of the polynomial are the same, which implies that $\Lambda$ is Hurwitz. In a similar way, $\Lambda=U^{-1}KU$ and the second statement can be guaranteed to be satisfied from the first statement. 
\end{IEEEproof}

The same result for the matrix $\Theta$ in \eqref{eqn-19} can be derived in the similar mechanism. From Proposition \ref{prop-1}, we can see the transformation of checking the roots of the polynomial into checking the Hurwitz property of a matrix. In the proof of Proposition \ref{prop-1}, the decomposition of the matrix $\Lambda$ is derived and thus $\bar{\Lambda}=K, \Omega=U$ and $\Pi=U^{-1}$. However, how to implement such a decomposition is flexible and can be set manually such that the constraint on $\mathbf{V}(x_{0})$ can be relaxed. 

\section{Optimal Control Design}
\label{sec-optcon}

Once the HOCBF and HOCLF are proposed, a direct way is to combine both of them to address the safety and stabilization control problems simultaneously. In this respect, we can implement the stabilizing controller in Theorem \ref{thm-2} to formulate the following optimization problem. 
\begin{align}
\label{eqn-20}
\begin{aligned}
\min&\quad  0.5|u-u_{\mathsf{s}}|^{2} \\
\text{s.t.}&\quad \mathfrak{c}(x)+\mathfrak{d}(x)u\leq0,
\end{aligned}
\end{align}
where $u_{\mathsf{s}}$ is the stabilizing controller \eqref{eqn-15}. In this way, the condition \eqref{eqn-9} is embedded in \eqref{eqn-20} such that the safety is always guaranteed. If the HOCBF-II is applied, then it is $\bar{\mathfrak{c}}(x)+\bar{\mathfrak{d}}(x)u\leq0$ that is embedded in \eqref{eqn-20}, and in this case the following mechanism can be implemented in a similar way. 

Next we only need to address how to modify the stabilizing controller in an optimal manner. Here we stress that the controller \eqref{eqn-15} may not be optimal, and $u_{\mathsf{s}}$ can be selected as certain optimal stabilizing controller \emph{a priori}. From \eqref{eqn-20}, the optimal controller is designed as
\begin{align}
\label{eqn-21}
u(x):=u_{\mathsf{s}}(x)+\bar{u}^{\ast}(x), 
\end{align}
where $\bar{u}^{\ast}(x)\in\mathbb{R}^{m}$ is the term to be designed. For this purpose, the following definition is introduced as an extension of \cite[Def. 2]{Cortez2022compatibility} to the high-order cases. 

\begin{definition}
\label{def-7}
Consider the system \eqref{eqn-1}. Let $\varkappa>0$ and $M(x)\in\mathbb{R}^{m\times m}$ be a locally Lipschitz continuous and positive-definite matrix. A set $\mathbb{B}(\varkappa)$ is said to be \emph{HOCBF-stabilizable}, if there exists $\mathbb{D}\subset\mathbb{B}(\varkappa)\cap\overbar{\mathbb{S}}$ such that 
\begin{align}
\label{eqn-22}
\mathfrak{b}(x)M^{-1}(x)\mathfrak{d}^{\top}(x)&\geq0, \quad \forall x\in(\mathbb{B}(\varkappa)\cap\overbar{\mathbb{S}})\setminus\mathbb{D}, \\
\label{eqn-23}
\tilde{\mathfrak{c}}(x)&\leq0, \quad \forall x\in\mathbb{D}, 
\end{align}
where $\tilde{\mathfrak{c}}(x):=\mathfrak{c}(x)+\mathfrak{d}(x)u_{\mathsf{s}}(x)$, $\mathfrak{b}(x)$ is defined below \eqref{eqn-15}, and $\mathfrak{c}(x), \mathfrak{d}(x)$ are defined below \eqref{eqn-10}.
\end{definition}

\begin{lemma}
\label{lem-1}
Consider the system \eqref{eqn-1}. For any locally Lipschitz continuous and positive-definite matrix $M(x)\in\mathbb{R}^{m\times m}$, there exists $\varkappa>0$ such that the set $\mathbb{B}(\varkappa)$ is HOCBF-stabilizable.
\end{lemma}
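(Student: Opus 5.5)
The idea is to take $\mathbb{D}$ to be the set of points of $\mathbb{B}(\varkappa)\cap\overbar{\mathbb{S}}$ at which the CBF constraint of \eqref{eqn-20} is already met by the nominal controller. Concretely, set
\begin{align*}
\mathbb{D}:=\{x\in\mathbb{B}(\varkappa)\cap\overbar{\mathbb{S}}: \tilde{\mathfrak{c}}(x)\leq0\},
\end{align*}
so that \eqref{eqn-23} holds by construction. The whole argument then reduces to \eqref{eqn-22}: we must show that for $\varkappa$ small enough, every $x\in\mathbb{B}(\varkappa)\cap\overbar{\mathbb{S}}$ with $\tilde{\mathfrak{c}}(x)>0$ satisfies $\mathfrak{b}(x)M^{-1}(x)\mathfrak{d}^{\top}(x)\geq0$. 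This mirrors the first-order argument of \cite[Def. 2]{Cortez2022compatibility}, which exploits the behaviour of $u_{\mathsf{s}}$ and of the barrier near the origin.

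Step 1 (behaviour near the origin). Recall $f(0)=0$ and $g(0)=0$, so $\mathfrak{b}(0)=0$, $\mathfrak{d}(0)=0$ and $u_{\mathsf{s}}(0)=0$. Suppose the origin lies in $\inte(\overbar{\mathbb{S}})$; this is the natural standing case, since stabilization to an unsafe point would contradict compatibility. Then $\phi_{r-1}(0)$ is finite and positive by \eqref{eqn-8}, so $\eta_{r}(1/\phi_{r-1}(0))>0$. Evaluating at the origin gives $\tilde{\mathfrak{c}}(0)=L_{f}\phi_{r-1}(0)-\eta_{r}(1/\phi_{r-1}(0))=-\eta_{r}(1/\phi_{r-1}(0))<0$, because $L_f\phi_{r-1}(0)=\frac{\partial\phi_{r-1}}{\partial x}f(0)=0$.

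Step 2 (continuity). The functions $\mathfrak{c}$ and $\mathfrak{d}$ are continuous on $\inte(\overbar{\mathbb{S}})$. The controller $u_{\mathsf{s}}$ is continuous away from the origin, and continuous at the origin under the HOSCP remark following Theorem \ref{thm-2}; if continuity fails, local boundedness of $u_{\mathsf{s}}$ suffices, since $\mathfrak{d}(x)\to0$. In either case $\tilde{\mathfrak{c}}$ is continuous at $0$, or at least $\limsup_{x\to0}\tilde{\mathfrak{c}}(x)<0$. Hence there is $\varkappa>0$ with $\mathbb{B}(\varkappa)\subset\inte(\overbar{\mathbb{S}})$ and $\tilde{\mathfrak{c}}(x)<0$ on $\mathbb{B}(\varkappa)$.

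Step 3 (conclusion). With this $\varkappa$ we get $\mathbb{D}=\mathbb{B}(\varkappa)\cap\overbar{\mathbb{S}}=\mathbb{B}(\varkappa)$. The set in \eqref{eqn-22} is then empty, so \eqref{eqn-22} holds vacuously, while \eqref{eqn-23} holds by the choice of $\mathbb{D}$. The matrix $M$ plays no role, which explains why the statement is claimed for every such $M$.

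The main obstacle is Step 2, namely controlling $\mathfrak{d}(x)u_{\mathsf{s}}(x)$ near the origin. Formula \eqref{eqn-11} gives $|u_{\mathsf{s}}|\le (|\mathfrak{a}|+\sqrt{\mathfrak{a}^2+|\mathfrak{b}|^4})/|\mathfrak{b}|$, which may blow up as $\mathfrak{b}\to0$. I would handle this in two ways. First, invoke the HOSCP to keep $u_{\mathsf{s}}$ bounded near $0$. Second, if that is unavailable, use the cruder bound $|\mathfrak{d}u_{\mathsf{s}}|\le|\mathfrak{d}|(2|\mathfrak{a}|/|\mathfrak{b}|+|\mathfrak{b}|)$. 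This bound tends to zero provided $|\mathfrak{d}|/|\mathfrak{b}|$ stays bounded near the origin, which holds for instance when both are built from $g$ with comparable Lipschitz-type bounds.
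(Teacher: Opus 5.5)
Your proposal is correct and follows essentially the paper's route. Both arguments show $\tilde{\mathfrak{c}}<0$ on an entire small ball $\mathbb{B}(\varkappa)$ inside the interior of $\overbar{\mathbb{S}}$, so $\mathbb{D}$ can be the whole ball, \eqref{eqn-22} holds vacuously and $M$ plays no role; they rest on the same implicit assumptions (origin an interior safe point, $u_{\mathsf{s}}$ bounded near $0$), and the only difference is that the paper uses the explicit estimate $\tilde{\mathfrak{c}}(x)\leq L|x|\,|\partial\phi_{r-1}(x)/\partial x|-\eta_{r}(1/\phi_{r-1}(x))$ while you evaluate $\tilde{\mathfrak{c}}(0)=-\eta_{r}(1/\phi_{r-1}(0))<0$ and invoke continuity.
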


\begin{IEEEproof}
Since the origin is included in $\overbar{\mathbb{S}}$, we consider a region around the origin in $\overbar{\mathbb{S}}$. That is, there exists $\delta>0$ such that $\mathbb{B}(\delta)\subset\overbar{\mathbb{S}}$. Hence, $\phi_{r-1}(x)>0$ for all $x\in\mathbb{B}(\delta)$. From the Lipschitz continuity of the functions $f, g$ in \eqref{eqn-1} and the continuity of $u_{\mathsf{s}}(x)$, we have $|f(x)+g(x)u_{\mathsf{s}}(x)|\leq L|x|$, where $x\in\mathbb{B}(\delta)$ and $L>0$ is the Lipschitz constant. 

We can find a sufficiently small $\epsilon>0$ such that for all $x\in\mathbb{B}(\epsilon), \eta_{r}(1/\phi_{r-1}(x))>\epsilon L|\partial\phi_{r-1}(x)/\partial x|$. Let $\varkappa:=\min\{\epsilon, \delta\}$, and thus $\mathbb{B}(\varkappa)\subseteq\mathbb{B}(\delta)\subset\overbar{\mathbb{S}}$. For all $x\in\mathbb{B}(\varkappa)$, $|f(x)+g(x)u_{\mathsf{s}}(x)|<L\varkappa$ and $|\partial\phi_{r-1}(x)/\partial x||f(x)+g(x)u_{\mathsf{s}}(x)|\leq\epsilon L|\partial \phi_{r-1}(x)/\partial x|<\eta_{r}(1/\phi_{r-1}(x))$. From the norm inequality, $\tilde{\mathfrak{c}}(x)\leq|\partial\phi_{r-1}(x)/\partial x||f(x)+g(x)u_{\mathsf{s}}(x)|-\eta_{r}(1/\phi_{r-1}(x)<0$. Hence, there exists $\mathbb{D}\subset\mathbb{B}(\varkappa)\cap\overbar{\mathbb{S}}$ such that $\tilde{\mathfrak{c}}(x)\leq0$ for all $x\in\mathbb{D}$. That is, the set $\mathbb{B}(\varkappa)$ is HOCBF-stabilizable, and the proof is completed. 
\end{IEEEproof}

\begin{figure*}[!t]
\centering
\begin{tabular}{ccc}
\hspace{-5pt}\includegraphics[height=0.6\columnwidth]{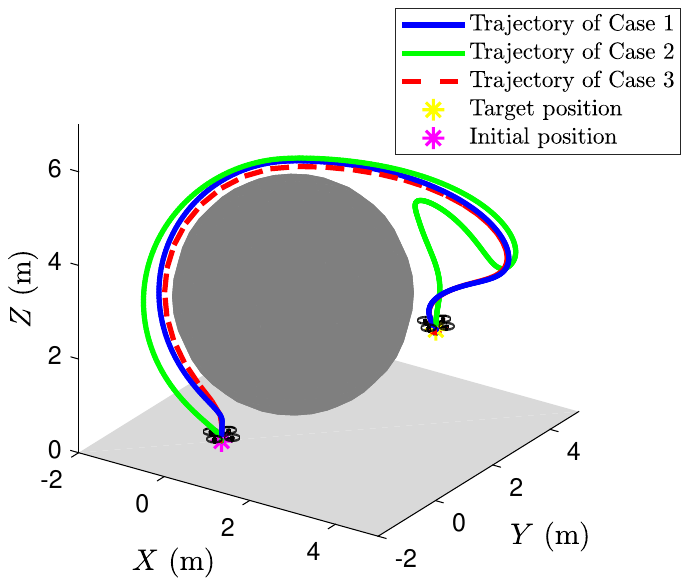}\hspace{-5pt}\vspace{-2pt} &
\includegraphics[height = 0.6\columnwidth]{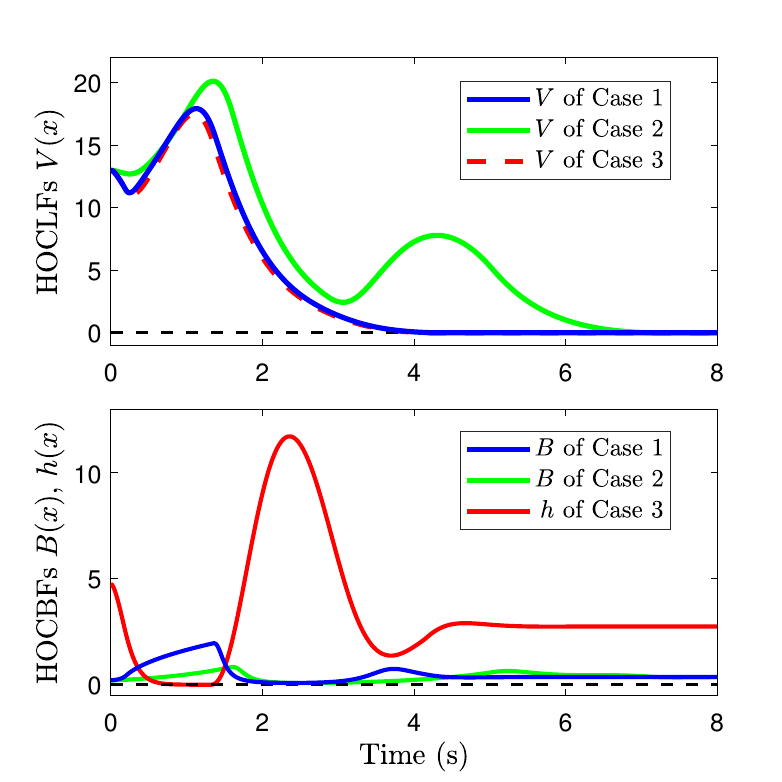}\hspace{-5pt}\vspace{-2pt} &
\includegraphics[height = 0.6\columnwidth]{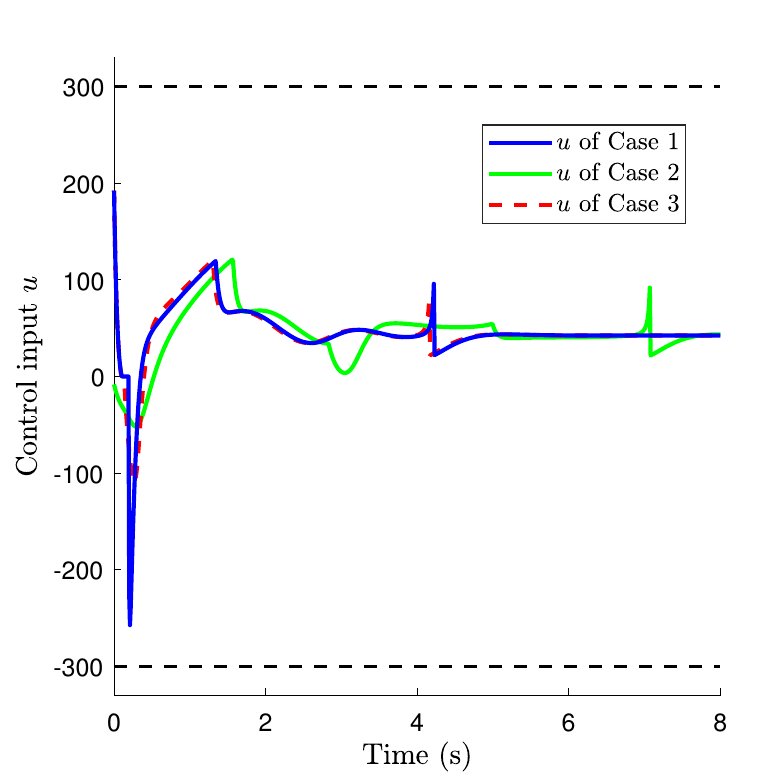}\vspace{-2pt} \\
\hspace{-5pt}\tiny{(a)}\hspace{-5pt}\vspace{-2pt} & \tiny{(b)}\hspace{-5pt}\vspace{-2pt} & \tiny{(c)}\vspace{-2pt} 
\end{tabular}
\caption{Illustration of the safe navigation of the quadrotor in the three cases. (a) The position trajectories of the quadrotor. (b) The evolution of the HOCLFs and HOCBFs. (c) The control inputs for the quadrotor.}
\label{fig1}
\end{figure*}

From Lemma \ref{lem-1}, the HOCBF-stabilizable set does exist. With the HOSCBF-stabilizable set, we can establish the optimal solution to \eqref{eqn-20} and derive the controller \eqref{eqn-21}. 

\begin{theorem}
\label{thm-4}
Consider the system \eqref{eqn-1}, the set $\overbar{\mathbb{S}}\in\mathbb{R}^{n}$ in \eqref{eqn-6}, and the problem \eqref{eqn-20}. The controller \eqref{eqn-21} with 
\begin{align}
\label{eqn-24}
\bar{u}^{\ast}(x):=\left\{\begin{aligned}
&0, &&\text{if\  } \tilde{\mathfrak{a}}(x)\leq0 \\
&-\frac{\tilde{\mathfrak{c}}(x)}{|\mathfrak{d}(x)|^{2}}\mathfrak{d}^{\top}(x),  &&  \text{otherwise}
\end{aligned}\right.
\end{align}
 is an optimal solution to \eqref{eqn-20}. Under the controller \eqref{eqn-21} with \eqref{eqn-24}, the set $\overbar{\mathbb{S}}$ is FI, and the closed-loop system is SGAS with respect to the HOCBF-stabilizable set. 
\end{theorem}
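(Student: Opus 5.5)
The proof has three parts: optimality of \eqref{eqn-24}, forward invariance of $\overbar{\mathbb{S}}$, and semiglobal asymptotic stability. I read $\tilde{\mathfrak{a}}(x)$ in \eqref{eqn-24} as the constraint value at the nominal input, $\tilde{\mathfrak{c}}(x)=\mathfrak{c}(x)+\mathfrak{d}(x)u_{\mathsf{s}}(x)$, since that is the only reading under which \eqref{eqn-24} is a projection.

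\textbf{Optimality.} Substitute $u=u_{\mathsf{s}}+\bar u$ into \eqref{eqn-20}. It becomes the problem of minimizing $0.5|\bar u|^{2}$ subject to the single affine constraint $\tilde{\mathfrak{c}}(x)+\mathfrak{d}(x)\bar u\leq0$. The objective is strictly convex and the feasible set is a half-space, so the KKT conditions are necessary and sufficient. The Lagrangian is $0.5|\bar u|^{2}+\mu(\tilde{\mathfrak{c}}+\mathfrak{d}\bar u)$, which gives $\bar u=-\mu\mathfrak{d}^{\top}$ with $\mu\geq0$ and complementary slackness. There are two cases.
\begin{itemize}
\item If $\tilde{\mathfrak{c}}(x)\leq0$, then $\mu=0$ and $\bar u^{\ast}=0$.
\item Otherwise the constraint is active, so $\mu=\tilde{\mathfrak{c}}/|\mathfrak{d}|^{2}>0$. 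Here I use item (i) of Definition \ref{def-4}, which gives $\mathfrak{d}(x)\neq0$ on $\inte(\overbar{\mathbb{S}})$. This recovers \eqref{eqn-24}.
\end{itemize}

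\textbf{Forward invariance.} In both cases the closed-loop input satisfies $\mathfrak{c}(x)+\mathfrak{d}(x)u\leq0$, which is exactly $\dot\phi_{r-1}\leq\eta_{r}(1/\phi_{r-1})$, i.e.\ \eqref{eqn-9} with the infimum attained. From here the proof of Theorem \ref{thm-1} goes through verbatim. The comparison argument on $\mathsf{H}_{r-1}=1/\phi_{r-1}$ is iterated backward through $\phi_{r-2},\ldots,\phi_{0}$, with \eqref{eqn-8} relating each $\phi_{i}$ to $\psi_{i}$. This gives $\psi_{i}(x(t))>0$ for all $i$, so $\overbar{\mathbb{S}}$ is FI.

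\textbf{Stability.} Fix the HOCBF-stabilizable ball $\mathbb{B}(\varkappa)$ provided by Lemma \ref{lem-1}, with its set $\mathbb{D}$. I take initial states in $\mathbb{B}(\varkappa)\cap\overbar{\mathbb{S}}$, which is the meaning of SGAS in Definition \ref{def-1}. The goal is to show the HOCLF inequality $\dot{\mathbf{V}}\preceq\Lambda\mathbf{V}$ survives the correction. Since $L_{g}L^{k}_{f}V=0$ for $k\leq p-2$, only the last component changes, by the amount $\mathfrak{b}(x)\bar u^{\ast}(x)$. There are two cases.
\begin{itemize}
\item On $\mathbb{D}$, \eqref{eqn-23} gives $\tilde{\mathfrak{c}}\leq0$, hence $\bar u^{\ast}=0$, and the proof of Theorem \ref{thm-2} applies unchanged.
\item On $(\mathbb{B}(\varkappa)\cap\overbar{\mathbb{S}})\setminus\mathbb{D}$, either $\bar u^{\ast}=0$, or $\mathfrak{b}\bar u^{\ast}=-(\tilde{\mathfrak{c}}/|\mathfrak{d}|^{2})\mathfrak{b}\mathfrak{d}^{\top}$ with $\tilde{\mathfrak{c}}>0$. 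I use \eqref{eqn-22} with $M=I$ (a weighted projection in the metric $M$ would give the general case), so $\mathfrak{b}\mathfrak{d}^{\top}\geq0$. Therefore $\mathfrak{b}\bar u^{\ast}\leq0$.
\end{itemize}
In both cases $L^{p}_{f}V+\mathfrak{b}u-\lambda_{p}^{\top}\mathbf{V}\leq-\sqrt{\mathfrak{a}^{2}+|\mathfrak{b}|^{4}}\leq0$ still holds. The comparison system $\dot\xi=\Lambda\xi$, together with item (iii) of Definition \ref{def-5}, then gives $V(x(t))\leq\xi_{1}(t)\to0$, and \eqref{eqn-12} converts this into a $\mathcal{KL}$ bound.

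\textbf{Main obstacle.} The hardest step is ensuring trajectories remain in $\mathbb{B}(\varkappa)$, so that \eqref{eqn-22}–\eqref{eqn-23} remain usable for all time. I would handle this by restricting $\mathbb{X}_{0}$ to a sublevel set $\{V\leq c\}$ contained in $\mathbb{B}(\varkappa)$, chosen via $\alpha_{1},\alpha_{2}$. The difficulty is that item (iii) only bounds $V$ by $\xi_{1}(t)$, which need not be monotone. So I would instead take $\mathbb{X}_{0}$ small enough that $\sup_{t}\xi_{1}(t)\leq\alpha_{1}(\varkappa)$. Such a bound follows from the exponential estimate on $\xi$. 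This gives invariance of the ball, and hence SGAS on that initial set.
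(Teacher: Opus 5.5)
Your proposal is correct and follows the paper's proof. The KKT conditions give the projection formula \eqref{eqn-24}. Forward invariance of $\overbar{\mathbb{S}}$ follows by rerunning the comparison argument of Theorem \ref{thm-1}, since the closed-loop input enforces $\dot{\phi}_{r-1}\leq\eta_{r}(1/\phi_{r-1})$. Stability follows because \eqref{eqn-22}--\eqref{eqn-23} force $\mathfrak{b}(x)\bar{u}^{\ast}(x)\leq0$ on the ball, so the argument of Theorem \ref{thm-2} survives the correction.

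Your choice $M=I$ is cleaner than the paper's $M(x)=\tilde{\mathfrak{c}}(x)|\mathfrak{d}(x)|^{-2}I$, which fails to be positive definite wherever $\tilde{\mathfrak{c}}(x)\leq0$. Your shrinking of $\mathbb{X}_{0}$ so that $\sup_{t}\xi_{1}(t)<\alpha_{1}(\varkappa)$ supplies the step, left implicit in the paper, that keeps trajectories inside $\mathbb{B}(\varkappa)$. Use the strict inequality there rather than $\leq$, since $\mathbb{B}(\varkappa)$ is open and the first-exit argument needs it.
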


\begin{IEEEproof}
First, we prove that the controller \eqref{eqn-21} with \eqref{eqn-24} is the optimal solution to \eqref{eqn-20}. From \cite[Ch. 5]{Boyd2004convex}, the problem \eqref{eqn-20} is convex with respect to the control input, and the solution to \eqref{eqn-20} exists if and only if the Karush-Kuhn-Tucker (KKT) conditions are satisfied. The Lagrangian for \eqref{eqn-20} is defined as 
\begin{align*}
\mathfrak{L}(x, u, \varrho)&:=0.5|u-u_{\mathsf{s}}|^{2}+\varrho(\mathfrak{a}(x)+\mathfrak{b}(x)u), 
\end{align*}
where $\varrho\geq0$ is a Lagrange multiplier. The KKT conditions are given below.
\begin{align}
\label{eqn-25}
\frac{\partial\mathfrak{L}(x, u, \varrho)}{\partial u}:=u-u_{\mathsf{s}}+\varrho\mathfrak{d}^{\top}(x)&=0, \\
\label{eqn-26}
\varrho(\mathfrak{c}(x)+\mathfrak{d}(x)u)&=0.
\end{align}
From \eqref{eqn-25}-\eqref{eqn-26}, if $\mathfrak{c}(x)+\mathfrak{d}(x)u=0$, then no constraint is imposed to the choice of $\varrho\geq0$. In this case, $\varrho$ is chosen to be $0$ such that the cost function in \eqref{eqn-20} is minimized, thereby resulting in the optimal solution $u^{\ast}(x)=u_{\mathsf{s}}(x)$. If $\mathfrak{c}(x)+\mathfrak{d}(x)u\neq0$, then $\varrho\equiv0$ from \eqref{eqn-26} and thus $u^{\ast}(x)=u_{\mathsf{s}}(x)$ from \eqref{eqn-25}. In this case, we need to consider two scenarios. The first scenarios is that $\tilde{\mathfrak{c}}(x)\leq0$, which implies that the safety constraint holds. Hence, $u^{\ast}(x)=u_{\mathsf{s}}(x)$ minimizes the cost function. The second scenario is that $\tilde{\mathfrak{c}}(x)>0$, and the safety constraint does not hold. Furthermore, $u_{\mathsf{s}}$ is not the solution to \eqref{eqn-20} and a novel controller is needed. In this scenario, from \eqref{eqn-25}-\eqref{eqn-26}, we have $\varrho=\tilde{\mathfrak{c}}(x)|\mathfrak{d}(x)|^{-2}$ and thus $u^{\ast}(x)=u_{\mathsf{s}}(x)-\tilde{\mathfrak{c}}(x)|\mathfrak{d}(x)|^{-2}\mathfrak{d}^{\top}(x)$. Hence, the controller \eqref{eqn-21} with \eqref{eqn-24} is the optimal solution. 

Next, from \eqref{eqn-24}, if $\tilde{\mathfrak{c}}(x)\leq0$, then $\bar{u}^{\ast}(x)=0$ and 
\begin{align}
\label{eqn-27}
&L_{f}\phi_{r-1}(x)+L_{g}\phi_{r-1}(x)u_{\mathsf{s}}(x)\leq\eta_{r}(1/\phi_{r-1}(x)). 
\end{align}
If $\tilde{\mathfrak{c}}(x)>0$, then $u(x)=u_{\mathsf{s}}+\bar{u}^{\ast}(x)$, and further  
\begin{align}
\label{eqn-28}
&L_{f}\phi_{r-1}(x)+L_{g}\phi_{r-1}(x)(u_{\mathsf{s}}(x)+\bar{u}^{\ast}(x)) \nonumber \\
&=L_{f}\phi_{r-1}(x)+L_{g}\phi_{r-1}(x)u_{\mathsf{s}}(x)-\tilde{\mathfrak{c}}(x)  \nonumber \\
&=\eta_{r}(1/\phi_{r-1}(x)). 
\end{align}
From \eqref{eqn-27}-\eqref{eqn-28} and Theorem \ref{thm-1}, we conclude that under the controller \eqref{eqn-21} with \eqref{eqn-24}, the set $\overbar{\mathbb{S}}$ is FI. 

Finally, let $M(x):=\tilde{\mathfrak{c}}(x)|\mathfrak{d}(x)|^{-2}I$, and from Lemma \ref{lem-1} there exists an HOCBF-stabilizable set $\mathbb{B}(\varkappa)$ with $\varkappa>0$. Since $u(x)=u_{\mathsf{s}}(x)+\bar{u}^{\ast}(x)$, we have 
\begin{align*}
&L^{p}_{f}V(x)+L_{g}L^{p-1}_{f}V(x)u(x)+\lambda^{\top}\mathbf{V}(x) \nonumber \\
&\leq-\sqrt{\mathfrak{a}^{2}(x)+|\mathfrak{b}(x)|^{4}}+L_{g}L^{p-1}_{f}V(x)\bar{u}^{\ast}(x)  \nonumber \\
&\leq L_{g}L^{p-1}_{f}V(x)\bar{u}^{\ast}(x). 
\end{align*}
For all $x\in\mathbb{B}(\varkappa)$, $\bar{u}^{\ast}(x)=0$ if $\tilde{\mathfrak{c}}(x)\leq0$. Otherwise, from \eqref{eqn-22} we have $L_{g}L^{p-1}_{f}V(x)\bar{u}^{\ast}(x)=-\mathfrak{b}(x)M(x)\mathfrak{d}^{\top}(x)\leq0$. Hence, for all $x\in\mathbb{B}(\varkappa)$, $L^{p}_{f}V(x)+L_{g}L^{p-1}_{f}V(x)u(x)+\lambda^{\top}\mathbf{V}(x)\leq0$, which implies from the proof of Theorem \ref{thm-2} that the closed-loop system is SGAS with respect to $\mathbb{B}(\varkappa)$. 
\end{IEEEproof}

Theorem \ref{thm-4} presents the optimal controller for both stabilization and safety objectives. The proposed controller extends the classic case in \cite{Cortez2022compatibility}  to the high-order cases, and does no involve the choice of the gain margin in \cite{Jankovic2018robust}. In addition, if the HOCBF-II is applied, then a similar controller can be derived to guarantee the FI property of the set \eqref{eqn-3}. 

\section{Numerical Results}
\label{sec-example}

In this section, a numerical example from the safe navigation of quadrotors is presented to illustrate the derived results. All computation is executed via MATLAB R2023a on a laptop with AMD Ryzen 9 5900HX (3.30GHz) and 16GB RAM. Consider the quadrotor modelled in the inertial frame \cite{lee2010geometric}:
\begin{equation}
\label{eqn-29}
\begin{aligned}
\dot{p}&=v, && M\dot{v}=Mg\bm{e}_3-\mathbf{R}\bm{e}_3F, \\
\dot{\mathbf{R}}&=\mathbf{R}\hat{\boldsymbol{\omega}}, &&\mathbf{J}\dot{\boldsymbol{\omega}}=-\boldsymbol{\omega}\times\mathbf{J}\boldsymbol{\omega}+\boldsymbol{\tau}, 
\end{aligned}
\end{equation}
where $p\in\mathbb{R}^3$ is the position and $v\in\mathbb{R}^3$ is the velocity. $M>0$ is the mass of quadrotor, $g$ is the acceleration of gravity, $\bm{e}_3=[0, 0, 1]^\top$, and $\mathbf{R}\in\mathsf{SO}(3)$ is the rotation matrix from the body-fixed frame to the inertial frame, where $\mathsf{SO}(3):=\{\mathbf{R}\in\mathbb{R}^{3\times3}: \mathbf{R}^\top\mathbf{R}=I, \mathsf{det}(\mathbf{R})=1\}$ and $\mathsf{det}(\mathbf{R})$ is the determinant of $\mathbf{R}$. $F\in\mathbb{R}$ is the thrust, which is taken as the control input. In the body-fixed frame, $\boldsymbol{\omega}\in\mathbb{R}^3$ is the angular velocity, $\hat{\boldsymbol{\omega}}\in\mathsf{SO}(3)$ is the skew symmetric matrix derived from $\boldsymbol{\omega}$, $\mathbf{J}\in\mathbb{R}^{3\times3}$ is the inertia matrix, ``$\times$'' is the cross product, and $\boldsymbol{\tau}\in\mathbb{R}^3$ is the moment; see \cite{lee2010geometric} for more details. 

The quadrotor aims to execute a safe navigation mission in an environment; see Fig. \ref{fig1}(a). The unsafe set is defined as $\mathbb{O}:=\{p\in\mathbb{R}^3: (p-p_{o})^2\leq r_{o}^2\}$, which is a spherical obstacle. $p_{o}\in\mathbb{R}^3$ and $r_{o}>0$ are respectively the center and radius of the obstacle. To maintain an appropriate safety margin, let the safe set be $\mathbb{S}:=\{p\in\mathbb{R}^3: h(p)\geq0\}$ with 
\begin{equation} 
\label{eqn-30}
  h(p):=(p-p_{o})^2-r^2, 
\end{equation}
where $r:=r_{o}+\delta$ with a safety margin $\delta>0$. Since $h(p)$ is only dependent on the position $p$, the attitude controller in \cite{lee2010geometric} is applied directly such that only the position controller needs to be designed via the optimization problem \eqref{eqn-20}. 

In order to complete the safe navigation mission, the first goal is to reach a desired position $p_{d}\in\mathbb{R}^3$. To this end, we define the following HOCLF candidate:
\begin{equation} 
\label{eqn-31}
V(p):=(p-p_{d})^2.
\end{equation}
Hence, the position controller is to guarantee the convergence of $V(p)$. From Definition \ref{def-5}, $V$ is an HOCLF if the conditions \eqref{eqn-12}-\eqref{eqn-13} are satisfied. The second goal is to guarantee the quadrotor to be navigated safely. From \eqref{eqn-30}, we introduce the following HOCBF candidate:
\begin{equation}
\label{eqn-32}
  B(p):=1/h(p),
\end{equation}
which is an HOCBF-I if the conditions \eqref{eqn-8}-\eqref{eqn-9} hold and is an HOCBF-II if the conditions \eqref{eqn-16}-\eqref{eqn-17} hold. 

Here we consider the following three cases. Case 1 is to combine the proposed HOCLF and HOCBF-I. Case 2 is to combine the  HOCLF and HOCBF-II, while Case 3 is to combine the HOCLF and the HOCBF in \cite{Tan2021high}. Let $M=4.34$, $g=9.8$, $\mathbf{J}=\mathsf{diag}(0.082, 0.0845, 0.1377)$, $p_{o}:=[1, 1, 3]^\top$, $p_{d}:=[3, 3, 2]^\top$, $r_{o}=2.35$ and $\delta=0.15$. For the HOCLF \eqref{eqn-31}, let $\lambda=[-10, -11]^{\top}$. For the HOCBF-I \eqref{eqn-32}, let $\eta_0(s):=2s$ and $\eta_1(s):=10s$ for all $s>0$. For the HOCBF-II, let $[\Theta_{21},\Theta_{22}]=[10,-10]$. For the HOCBF in \cite{Tan2021high}, let $\gamma_1(s):=10s$ for all $s>0$. 

Based on \eqref{eqn-20}-\eqref{eqn-21} and \eqref{eqn-24}, all the simulation results are presented in Fig. \ref{fig1}. From Fig. \ref{fig1}(a), the obstacle avoidance is achieved in the three cases. From Fig. \ref{fig1}(b), $B(p)>0$ in Cases 1 and 2, and $h(p)>0$ in Case 3, all of which correspond to the obstacle avoidance in Fig. \ref{fig1}(a). The HOCLF $V(p)$ in the three cases converges to $0$ asymptotically, which implies the satisfaction of the navigation mission. However, for Cases 1 and 3 it only takes 4s to achieve the navigation mission, while it takes around 6.5s for Case 2. The control inputs of the three cases are presented in Fig. \ref{fig1}(c), where we can see the bounds of the control input in Case 2 are minimal. 

To compare the control efforts in the three cases, we introduce the evaluation function $U_i:=\int_{0}^{8}|u_i(t)|dt$, $i\in\{1, 2, 3\}$ and derive $U_{1}=400.92, U_{2}=383.18, U_{3}=399.28$, which in turn shows the advantages of the HOCBF-II. If the control input is constrained, then the explicit controller \eqref{eqn-24} cannot be derived, and the optimization problem \eqref{eqn-20} needs to be solved in real time. Let $F\in[-300, 300]$, and the computation times for the three cases are $t_{1}=6.11$ms, $t_{2}=6.87$ms, and $t_{3}=6.26$ms. In this way, we can see the metrics of the HOCBF-I in terms of the computational complexity. 

\section{Conclusions}
\label{sec-conclusion}

In this paper, we considered the safety and stabilization control problems of nonlinear systems via high-order control barrier and Lyapunov functions. A novel reciprocal high-order control barrier function was proposed for the safety control, while novel high-order control Lyapunov function was proposed for the stabilization control. The compatibility of the safety and stabilization was investigated such that these two objectives could be achieved in a unified manner. Future work will focus on high-order control functions for more general cases like distributed, stochastic and time-delay cases.

\bibliographystyle{IEEEtran}
\bibliography{bibfile26}
\end{document}